
\typeout{KR2023 Instructions for Authors}


\documentclass{article}
\pdfpagewidth=8.5in
\pdfpageheight=11in

\usepackage{kr}
\usepackage{times}
\usepackage{soul}
\usepackage{cancel}
\usepackage[small]{caption}
\usepackage{booktabs}
\usepackage{algorithm}
\usepackage{algorithmic}

\usepackage{eucal}




\usepackage[utf8]{inputenc} 
\usepackage[T1]{fontenc}    
\usepackage{hyperref}       
\usepackage{url}            
\usepackage{booktabs}       
\usepackage{amsfonts}       
\usepackage{nicefrac}       
\usepackage{microtype}      
\usepackage{lipsum}
\usepackage{fancyhdr}       
\usepackage{graphicx}       
\graphicspath{{media/}}     
\usepackage{wrapfig}        

\usepackage{complexity}
\mathchardef\mhyphen="2D 
\usepackage{arydshln}
\usepackage{xcolor}
\usepackage{enumitem}
\usepackage{bigstrut}
\usepackage{comment}
\usepackage{tikz}
\usepackage{adjustbox}
\usetikzlibrary{positioning} 
\usepackage{stmaryrd}
\usepackage{amssymb,amsmath,amsthm,mdsymbol}

\usepackage[switch,mathlines]{lineno}
\usepackage{etoolbox}

\newcommand*\linenomathpatch[1]{%
  \cspreto{#1}{\linenomath}%
  \cspreto{#1*}{\linenomath}%
  \csappto{end#1}{\endlinenomath}%
  \csappto{end#1*}{\endlinenomath}%
}
\newcommand*\linenomathpatchAMS[1]{%
  \cspreto{#1}{\linenomathAMS}%
  \cspreto{#1*}{\linenomathAMS}%
  \csappto{end#1}{\endlinenomath}%
  \csappto{end#1*}{\endlinenomath}%
}

\expandafter\ifx\linenomath\linenomathWithnumbers
  \let\linenomathAMS\linenomathWithnumbers
  \patchcmd\linenomathAMS{\advance\postdisplaypenalty\linenopenalty}{}{}{}
\else
  \let\linenomathAMS\linenomathNonumbers
\fi

\linenomathpatch{equation}
\linenomathpatchAMS{gather}
\linenomathpatchAMS{multline}
\linenomathpatchAMS{align}
\linenomathpatchAMS{alignat}
\linenomathpatchAMS{flalign}

\makeatletter
\patchcmd{\mmeasure@}{\measuring@true}{
  \measuring@true
  \ifnum-\linenopenaltypar>\interdisplaylinepenalty
    \advance\interdisplaylinepenalty-\linenopenalty
  \fi
  }{}{}
\makeatother


\newtheorem{theorem}{Theorem}
\newtheorem{corollary}[theorem]{Corollary}

\newtheorem{proposition}[theorem]{Proposition}

\theoremstyle{definition} 
\newtheorem{definition}[theorem]{Definition}
\newtheorem{example}[theorem]{Example}

\newcommand{\eval}[3]{\llbracket{#1}\rrbracket_{{#2},{#3}}}
\newcommand{\evalX}[2]{\llbracket{#1}\rrbracket_{{#2}}}
\newcommand{\mA}{\mathfrak{A}}
\newcommand{\mB}{\mathfrak{B}}
\newcommand{\mC}{\mathfrak{C}}
\newcommand{\lit}[1]{\mathsf{Lit}_{#1}}

\newcommand{\dep}[1]{=\!\!(#1)}
\newcommand{\Dom}{\mathrm{Dom}}
\newcommand{\Ran}{\mathrm{Ran}}
\newcommand{\lite}[1]{\mathrm{lit}\text{-}{#1}}
\newcommand{\liteprov}[1]{\mathrm{prov}\text{-}{#1}}
\newcommand{\X}{\mathbb{X}}

\newcommand{\proj}[2]{{#1}_{\upharpoonright {#2}}}
\newcommand{\sub}{\subseteq}

\newcommand{\dfn}{\coloneqq}
\newcommand{\Var}{\textrm{Var}}
\newcommand{\Fr}{\textrm{Fr}}
\newcommand{\Pow}{\mathcal{P}} 
\newcommand{\pind}{
\perp}

\newcommand{\pci}[3]{#2~\!\!\pind_{#1}\!\!~#3}
\newcommand{\pmi}[2]{#1~\!\!\pind\!\!~#2}
\newcommand{\nonpmi}[2]{#1~\!\!\not\pind\!\!~#2}

\newcommand{\zt}{\bot?}
\newcommand{\nzt}{\cancel{\zt}}

\newcommand{\simdif}[2]{#1\triangle\!#2}

\newcommand{\fofull}{\FO(=,\neq, \leq, \not\leq)}
\newcommand{\fopos}{\FO(=, \nzt,\leq)}
\newcommand{\foneg}{\FO(\zt, \neq, \not\leq)}

\newcommand{\nnf}{\mathrm{nnf}}
\newcommand{\ar}{\mathrm{ar}}
\newcommand{\as}{\textrm{As}}
\newcommand{\KX}{\mathbb{X}}
\newcommand{\KY}{\mathbb{Y}}
\newcommand{\KZ}{\mathbb{Z}}

\newcommand{\SB}{\mathbb{B}}
\newcommand{\support}{\mathrm{Sup}}

\newcommand{\cmap}[1]{\xi_{#1}}


%

\thispagestyle{empty}
\rhead{ \textit{ }} 




\pdfinfo{
/TemplateVersion (KR.2022.0, KR.2023.0)
}

\title{Unified Foundations of Team Semantics via Semirings
}

\author{%
Timon Barlag$^1$\and
Miika Hannula$^2$\and
Juha Kontinen$^{2}$\and
Nina Pardal$^3$\and
Jonni Virtema$^3$ \\
\affiliations
$^1$Leibniz Universit\"at Hannover\\
$^2$University of Helsinki\\
$^3$University of Sheffield\\
\emails
barlag@thi.uni-hannover.de,
\{miika.hannula, juha.kontinen\}@helsinki.fi,\\
\{n.pardal, j.t.virtema\}@sheffield.ac.uk.
}

\begin{document}

\maketitle

\begin{abstract}
Semiring semantics for first-order logic provides a way to trace how facts represented by a model are used to deduce satisfaction of a formula.
Team semantics is a framework for studying logics of dependence and independence in diverse contexts such as databases, quantum mechanics, and statistics by extending first-order logic with atoms that describe dependencies between variables. Combining these two, we propose a unifying approach for analysing the concepts of dependence and independence via a novel semiring team semantics, which subsumes all the previously considered variants for first-order team semantics.
%
In particular, we study the preservation  of satisfaction of dependencies and formulae between different semirings. In addition we create links to reasoning tasks such as provenance, counting, and repairs.

\end{abstract}


\section{Introduction}
Team semantics offers a 
logical framework to study important concepts that arise in the presence of plurality of data such as dependence and independence. 
The birth of the area can be traced back to the introduction of dependence logic in \cite{vaananen07}.
During the past decade, the expressivity and complexity theoretical aspects of logics in team semantics have been actively studied.
Fascinating connections have been drawn to areas such as of database theory \cite{HannulaKV20,HannulaK16}, verification \cite{GutsfeldMOV22}, real valued computation \cite{abs-2003-00644}, and quantum foundations \cite{AlbertG22,abramsky2021team}.
The study has focused on logics in the first-order, propositional and modal team semantics, and more recently also in the multiset \cite{DurandHKMV18,GradelW22} and probabilistic settings \cite{HKMV18}. 
Prior to this work, these adaptations of team semantics have been studied in isolation from one another.
%

Data provenance provides means to describe the origins of data, allowing to give information about the witnesses to a query, or determining how a certain output is derived. 
Provenance semirings were introduced in \cite{GreenKT07} to devise a general framework that allows to uniformly treat extensions of positive relational algebra, where the tuples have annotations that reflect very diverse information. Some motivating examples of said relations come from incomplete and probabilistic databases, and bag semantics. 
This semiring framework captures a notion of data provenance called \emph{how-provenance}, where the semiring operations essentially capture how each output is produced from the source. 
Following this framework, semiring semantics for full first-order logic ($\FO$) were developed in \cite{gradelarxiv17}. 
The semiring semantics for $\FO$ refines, in particular, the classical Boolean semantics by allowing formulae to be evaluated as values from a semiring. This allows for example counting proof trees, or winning strategies in the model checking game for $\mA$ and $\phi$. 




In databases, dependencies are applied as integrity constraints (ICs) that specify sets of rules that the database needs to satisfy. Formal analysis of the rules is facilitated by viewing them as $\FO$ sentences that usually follow certain syntactic patterns.
This approach is sometimes inadequate because query languages such as SQL operate with multisets (i.e., \emph{bags}) of tuples instead of sets. Recently, 
\cite{ChuMRCS18} have formulated ICs, such as keys and foreign keys, over semirings to study SQL query equivalence. In probability theory, conditional independence has widespread applications; for instance, assumptions about conditional independence can simplify computations of joint probabilities of variables. It is known that dependency notions in database theory and probability theory are interlinked because many (but not all) such concepts can be rewritten in terms of information-theoretic measures such as conditional entropy and conditional mutual information \cite{Lee87,5075885}. However, we are not aware of previous works that use semirings to the same effect.



Similar to extending first-order logic with counting,
 we extend the semiring semantics of $\FO$ \cite{gradelarxiv17} with the ability of comparing the semiring  values of first-order formulae.
Using this formalism we define concepts such as dependence and independence in a way that encompasses prior interpretations. The proposed formalism also provides a robust framework for studying the preservation of satisfaction and entailment for dependence statements when moving from one semiring to another. Such preservation results have previously been studied between database and probability theory \cite{2242326,GyssensNG14,DurandHKMV18,KenigS22,Malvestuto86,Malvestuto92,WongBW00}.  
Furthermore, we propose a unified approach to team semantics that involves annotating the elements of a team with elements from an arbitrary semiring.
 By doing so, the original team semantics and its quantitative variants 
can be recovered by choosing a suitable concrete semiring.
The conversion to semiring team semantics enables provenance analysis and other reasoning tasks to be performed for the first time for expressive team-based logics. 
\section{Preliminaries}
We fix a countably infinite set $\Var$ of \emph{variables}. 
We use $\mA,\mB,\mC,\dots$ to denote \emph{first-order structures}, and write $A,B,C,\dots$ for their \emph{domains}.
An \emph{assignment} (of $\mA$) is a function $s$ that maps a finite set $D\sub \Var$ of variables to some values (in $A$). 
We call $D$ the \emph{domain} of $s$, written $\Dom(s)$.
For a variable $x$ and a value $a$, we write $s[a/x]$ for the assignment with domain $\Dom(s)\cup\{x\}$ which maps $x$ to $a$ and otherwise agrees with $s$.

%

A \emph{team} $X$ is a finite set of assignments $s$ with a shared domain $D$. We call $D$ the \emph{domain} of $X$, written $\Dom(X)$.
Given a first-order structure $\mA$, we say that $X$ is a \emph{team of} $\mA$, if $A$ subsumes the ranges of each $s\in X$.
%
%
%
Moving from single assignments to sets of assignments enables us to interpret 
dependency statements between variables:

A \emph{conditional independence atom} is an expression of the form $\pci{\vec{x}}{\vec{y}}{\vec{z}}$, where $\vec{x},\vec{y},\vec{z}$ are variable sequences (not necessarily of the same length).
A team $X$ \emph{satisfies} $\pci{\vec{x}}{\vec{y}}{\vec{z}}$, written $X\models \pci{\vec{x}}{\vec{y}}{\vec{z}}$, if for all $s,s'\in X$ such that $s(\vec{x})=s'(\vec{x})$ there exists $s'' \in X$ such that $s(\vec{x}\vec{y})=s''(\vec{x}\vec{y})$ and $s'(\vec{z})=s''(\vec{z})$. A \emph{pure independence atom} is an expression of the form $\pmi{\vec{x}}{\vec{y}}$, defined as $\pci{\emptyset}{\vec{x}}{\vec{y}}$. A team $X$ satisfies $\pmi{\vec{x}}{\vec{y}}$, if for all $s,s'\in X$ there exists $s''\in X$ such that $s(\vec{x})=s''(\vec{x})$ and $s'(\vec{y})=s''(\vec{y})$.

A \emph{dependence atom} is an expression of the form $\dep{\vec{x},\vec{y}}$, where $\vec{x}$ and $\vec{y}$ are variable sequences. 
A team $X$ \emph{satisfies} $\dep{\vec{x},\vec{y}}$, if for all $s,s'\in X$, $s(\vec{x})=s'(\vec{x})$ implies $s(\vec{y})=s'(\vec{y})$.

An \emph{inclusion atom} is an expression of the form $\vec{x} \subseteq \vec{y}$, where $\vec{x}$ and $\vec{y}$ are variables sequences of the same length. A team $X$ \emph{satisfies} $\vec{x} \subseteq \vec{y}$, if for all $s \in X$ there is $s'\in X$ such that $s(\vec{x}) = s'(\vec{y})$.

In the probabilistic team semantics setting, $\pci{\vec{x}}{\vec{y}}{\vec{z}}$ is given the usual meaning of conditional independence in probability theory. Furthermore, the probabilistic interpretation of $\vec{x} \subseteq \vec{y}$ states that the marginal distributions of  $\vec{x}$ and $\vec{y}$ are identical.

If $\alpha$ is any atom for which satisfaction by a team $X$ is defined, we extend this definition to first-order structures $\mA$ by saying that $X$ \emph{satisfies} $\alpha$ \emph{under} $\mA$, written $\mA \models_X \alpha$, if $X$ satisfies $\alpha$.

\section{Semiring Perspective on Dependencies}\label{sect:semiring-perspective}
In this section we consider generalising teams and dependencies using semirings.

\subsection{Semirings}
We start by briefly reviewing semiring related concepts that are necessary for the present paper.  
  
\begin{definition}[Semiring]
A \emph{semiring} is a tuple $K=(K,+,\cdot,0,1)$, where $+$ and $\cdot$ are binary operations on $K$, $(K,+,0)$ is a commutative monoid with identity element $0$, $(K,\cdot ,1)$ is a monoid with identity element $1$, $\cdot$ left and right distributes over $+$, and $x \cdot 0 =0= 0\cdot x$ for all $x \in K$.
 $K$ is called \emph{commutative} if $(K,\cdot ,1)$ is a commutative monoid.
 As usual, we often
write $ab$ instead of $a \cdot b$. 
\end{definition}
That is, semirings are rings which need not have additive inverses. We focus on the listed semirings that 
 encapsulate the set, multiset, and distribution based team semantics:

\begin{itemize}
	\item The \emph{Boolean semiring} $\mathbb{B}=(\mathbb{B},\lor,\land,0,1)$ models logical truth and is formed from the two-element Boolean algebra. It is the simplest example of a semiring that is not a ring.
	\item The \emph{probability semiring} $\mathbb{R}_{\geq 0}=(\mathbb{R}_{\geq 0},+,\cdot,0,1)$ consists of the non-negative reals with standard
	addition and multiplication.
	\item The \emph{semiring of natural numbers} $\mathbb{N}=(\mathbb{N},+,\cdot,0,1)$ consists of natural numbers with their usual operations.
 \end{itemize}
 Other examples include the semiring of multivariate polynomials $\mathbb{N}[X]=(\mathbb{N}[X],+,\cdot,0,1)$ which is the free commutative semirings generated by the indeterminates in $X$, 
 the
 \emph{tropical semiring} $\mathbb{T} = (\mathbb{R}\cup\{\infty\}, \min, +, \infty, 0)$ which consists of the reals expanded with infinity
	and has $\min$ and $+$ respectively plugged in for addition and multiplication, and
%
the Lukasiewicz semiring $\mathbb{L} = ([0,1], \max, \cdot , 0, 1)$, used in multivalued logic, which endows the 
	unit interval with $\max$ addition and multiplication $a \cdot b \dfn \min(0, a+b-1)$.

%
Let $\leq$ be a partial order. 
A binary operator $*$ is said to be \emph{monotone under $\leq$} if $a\leq b$ and $a'\leq b'$ implies $a*a' \leq b*b'$. 
A \emph{partially ordered semiring} is a tuple $K=(K,+,\cdot,0,1, \leq)$, where $(K,+,\cdot,0,1)$ is a semiring, $(K,\leq)$ is a partially ordered set, and $+,\cdot$ are monotone under $\leq$. Given a semiring ${K}=(K,+,\cdot,0,1)$, define a binary relation $\leq_{K}$ on $K$ as $a \leq_{K} b$ if $\exists c : a+c = b$.  
This relation is a preorder; meaning it is reflexive and transitive. If $\leq_{K}$ is also antisymmetric, it is a partial order, called the \emph{natural order} of ${K}$, and ${K}$ is said to be \emph{naturally ordered}. 
In this case, ${K}$ endowed with its natural order is a partially ordered semiring.

If a semiring $K$ satisfies $ab=0$ for some $a,b\in K$ where $a\neq 0 \neq b$, 
we say that $K$ has \emph{divisors of $0$}. On the other hand, 
a semiring $K$ is considered \emph{$+$-positive} if $a+b=0$ implies 
that $a=b=0$. If a semiring is both $+$-positive and has no divisors of $0$, 
it is referred to as \emph{positive}.
For example, the modulo two integer semiring $\mathbb{Z}_2$ 
is not positive since it is not $+$-positive (even though it has no divisors of $0$). 
Conversely, an example of a semiring with divisors of $0$ is $\mathbb{Z}_4$.
We can also examine the positivity of $K$ by looking at its 
\emph{characteristic mapping}, which is defined as the function 
$\cmap{K}:K\to \SB$ such that
\[
\cmap{K}(a) = 
\begin{cases}
1 &\text{ if }a \neq 0,\\
0 &\text{ if }a=0.
\end{cases}
\] 

\begin{proposition}[Proposition 6, \cite{gradelarxiv17}]
	A semiring $K$ is positive if and only if its characteristic mapping is a homomorphism. 
	\end{proposition}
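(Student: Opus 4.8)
The plan is to unfold the definition of a semiring homomorphism for $\cmap{K}\colon K \to \SB$ and match each homomorphism clause against one of the two defining conditions of positivity. Being a homomorphism means $\cmap{K}(0)=0$, $\cmap{K}(1)=1$, and both $\cmap{K}(a+b)=\cmap{K}(a)\lor\cmap{K}(b)$ and $\cmap{K}(a\cdot b)=\cmap{K}(a)\land\cmap{K}(b)$ for all $a,b\in K$. The first two conditions are immediate from the definition of $\cmap{K}$ (with $\cmap{K}(1)=1$ using that $1\neq 0$, i.e.\ that $K$ is nontrivial), so the statement reduces to showing that the additive clause is equivalent to $+$-positivity and the multiplicative clause is equivalent to the absence of divisors of $0$.

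For the additive clause I would argue by a case split on whether $a$ and $b$ are zero. Since $\lor$ on $\SB$ returns $0$ exactly when both arguments are $0$, the identity $\cmap{K}(a+b)=\cmap{K}(a)\lor\cmap{K}(b)$ says precisely that $a+b=0$ holds iff $a=0$ and $b=0$. The implication from $a=b=0$ to $a+b=0$ is automatic from $0+0=0$, so the real content of the clause is its converse, $a+b=0 \Rightarrow a=b=0$, which is exactly the definition of $+$-positivity. Hence the additive clause holds for all $a,b$ iff $K$ is $+$-positive.

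Symmetrically, for the multiplicative clause I would use that $\land$ on $\SB$ returns $1$ exactly when both arguments are $1$. Thus $\cmap{K}(a\cdot b)=\cmap{K}(a)\land\cmap{K}(b)$ says that $a\cdot b\neq 0$ iff $a\neq 0$ and $b\neq 0$. One direction, namely that $a=0$ or $b=0$ forces $ab=0$, is guaranteed by the semiring axiom $x\cdot 0 = 0 = 0\cdot x$; the remaining content is the contrapositive $ab=0 \Rightarrow (a=0 \text{ or } b=0)$, which is precisely the statement that $K$ has no divisors of $0$. Combining the two equivalences, $\cmap{K}$ is a homomorphism iff $K$ is both $+$-positive and free of zero divisors, i.e.\ iff $K$ is positive.

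Since every nontrivial direction collapses once the Boolean operations are spelled out, I do not expect a genuine obstacle; the only point requiring care is that one implication in each clause is already forced by the semiring axioms, so that positivity captures exactly the remaining converse directions. A minor edge case worth flagging is the degenerate semiring with $0=1$, where $\cmap{K}(1)=0\neq 1$ and $\cmap{K}$ fails to be a homomorphism even though the semiring is vacuously positive; the statement is therefore to be understood for nontrivial $K$, which I would note explicitly.
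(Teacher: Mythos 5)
Your proof is correct. The paper states this proposition without proof, citing it from Gr\"adel and Tannen, and your argument --- unfolding the four homomorphism clauses for $\cmap{K}$, observing that preservation of $0$ and $1$ is automatic, and matching the additive clause exactly to $+$-positivity and the multiplicative clause exactly to the absence of divisors of $0$ --- is precisely the standard definition-chasing argument behind the cited result, so there is nothing to compare against on the paper's side. Your flag about the degenerate semiring with $0=1$ (which is vacuously positive under the paper's stated definitions, yet has $\cmap{K}(1)=0\neq 1$) is a genuine edge case that the paper's definitions leave open; it is handled in the literature by the usual convention that semirings satisfy $0\neq 1$, and noting this explicitly, as you do, is the right call.
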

	
In particular, note that
	the Boolean semiring $\mathbb{B}$, the probability semiring  $\mathbb{R}_{\geq 0}$, and the semiring of natural numbers $\mathbb{N}$ are positive and naturally ordered.

\subsection{$K$-teams}
%
Given a semiring $K$, the concept of a $K$-team is obtained by labeling each assignment of a team with an element from $K$. 
%
If a $D$ is a set of variables and $A$ a set, we define $\as(D,A)$ as the set all assignments $s:D\to A$. 
\begin{definition}[$K$-team]
A $K$\emph{-team} is a
function $\KX \colon \as(D,A)\to K$, where
$ K=(K,+,\cdot,0,1)$ is a (commutative) semiring, $D$ is a set of variables, and A is a set.
The \emph{support} of $\KX$ is defined as $\support(\KX)\dfn \{ s \mid \KX(s)\neq 0\}$.
Provided that we have an ordering on $K$, we say that $\KX$ is a \emph{subteam} of $\KY$ if $\KX(s) \leq \KY(s)$ for every $s \in \support(\KX)$.

\end{definition}    

We can now reconceptualise the notion of a team as a $K$-team by associating 
each possible assignment with either a $1$ or $0$ label depending 
on whether or not it belongs to the team. 
 When dealing with probabilistic teams, each assignment is labeled 
 with a non-negative 
 real number that can be interpreted as a probability by scaling the sum of these 
 labels to one. For multiteams, each assignment is assigned a natural number. 
 Fig. \ref{tab:three} provides an illustration of these concepts.

\begin{figure}[h!]
	\caption{$K$-teams $\KX_i \colon  \as(D,A)\to K$, $D=\{x,y\}$, $A=\{a,b\}$, 
	representing a team, a multiteam, and a probabilistic team.}\label{tab:three}
	\centering
	\begin{tabular}{@{}cc:c@{}}
	\multicolumn{3}{c}{$K= \mathbb{B}$} \\
	\toprule
	$x$ & $y$ & $\KX_1(s)$ \\
	\midrule
	a & a & 1 \\
	a & b & 1 \\
	b & a & 0 \\
	b & b & 0 \\
	\bottomrule
	\end{tabular}
	\hspace{1cm}
	\begin{tabular}{@{}cc:c@{}}
	\multicolumn{3}{c}{$ K = \mathbb{N}$} \\
	\toprule
	$x$ & $y$ & $\KX_2(s)$ \\
	\midrule
	a & a & 2 \\
	a & b & 0 \\
	b & a & 0 \\
	b & b & 5 \\
	\bottomrule
	\end{tabular}
	\hspace{1cm}
	\begin{tabular}{@{}cc:c@{}}
	\multicolumn{3}{c}{$K = \mathbb{R}_{\geq 0}$} \\
	\toprule
	$x$ & $y$ & $\KX_3(s)$ \\
	\midrule
	a & a & 1/4 \\
	a & b & 3/4 \\
	b & a & 0 \\
	b & b & 0 \\
	\bottomrule
	\end{tabular}
\end{figure}

\subsection{Dependencies over $K$-teams: A Prologue}
Our goal is to find a common language for expressing concepts 
such as dependence and independence in different $K$-teams.
Referring back to the preliminaries section, the reader may observe that
the fundamental dependency concepts in team semantics can 
be formalised using the language of first-order logic.  
This approach, however, becomes insufficient when dealing with multisets
 or probability distributions. For example, the concept of 
 independence between two random variables involves counting, 
 which is beyond the capabilities of first-order logic; this appears to be the case in the semiring 
 context as well  \cite{abs-2203-03425}. Therefore, we explore 
   extensions of first-order logic to overcome such limitations. The following example hints at the direction 
 we will take.
 
  \begin{example}\label{ex:logicaldef}
 We aim to find a common logical expression underlying both relational
 and probabilistic interpretations of conditional independence.
 To this end, fix a conditional independence atom $\pci{x}{y}{z}$ over variables $x,y,z$.
 
 In the relational context, viewing a team $X$ with domain $\{x,y,z\}$ as a ternary relation $R=\{(s(x),s(y),s(z))\mid s \in X\}$, 
 we observe that
  $X$ satisfies $\pci{x}{y}{z}$ if and only if 
 $R$ satisfies the first-order sentence
 \begin{equation}\label{eq:reldef}
	 \forall abcde (R(a,b,c)\land R(a,d,e) \to R(a,b,e)).
 \end{equation}
 
 Moving to the probability context, two random variables $y$ and $z$ are 
 conditionally independent given a random variable $x$ if and only if for
 all values $a,b,c$,  
 \begin{equation}\label{eq:probdef}
	\hspace{-2mm}P(y=b\mid x=a)\cdot P(z=c\mid x=a) = P(yz=bc \mid x=a).
\end{equation} 

Our strategy is to 
transform \eqref{eq:probdef} into a ``logical'' sentence similar to \eqref{eq:reldef}.
First, we remove conditional probabilities to obtain from \eqref{eq:probdef} the equation
\begin{equation*}
 P(xy=ab)\cdot P(xz=ac) = P(xyz=abc)\cdot P(x=a).
\end{equation*} 
Next, we model the probability distribution with a ternary function $R$ 
mapping value triples $(a,b,c)$ to the probabilities $P(xyz=abc)$, and rewrite a marginal probability $P(xy=ab)$ as 
the sum of probabilities $\sum_{c} R(a,b,c)$, arriving at
\begin{equation*}
	\sum_{c} R(a,b,c) \cdot \sum_{b} R(a,b,c) = R(a,b,c)\cdot \sum_{b,c} R(a,b,c).
\end{equation*}
By interpreting multiplication as conjunction and aggregate summation as existential quantification,
 and adding the universal quantification of triples,  we arrive at the expression 
\begin{multline}\label{eq:probdef3}
	\forall abc \big((\exists c R(a,b,c) \land \exists b R(a,b,c)) \\ = (R(a,b,c)\land \exists bc R(a,b,c))\big).
\end{multline}
This expression can be viewed as a ``logical'' sentence defining probabilistic independence. 
Note that it involves an equality statement between two formulae and is thus not a well-formed first-order sentence. 
However, if we replace the equality symbol $=$ with the logical equivalence symbol $\leftrightarrow$,
we obtain a first-order sentence
which, after removing logical redundancies, transforms into
\begin{equation*}
	\forall abc \big((\exists c R(a,b,c) \land \exists b R(a,b,c)) \to (R(a,b,c))\big).
\end{equation*}
By renaming the existentially quantified variables, and dragging them in front of the quantifier-free part, we obtain precisely the first-order sentence \eqref{eq:reldef}
that we used to define relational conditional independence.
 \end{example}
 
 Based on the example, it appears that logical statements formulated in the manner of \eqref{eq:probdef3} can integrate diverse expressions of dependency concepts. To give more depth to this idea,
 we will dedicate the next section to the interpretation of first-order logic and its extension with equality statements between formulae within the semiring context.
 Following this, we will revisit  the concept of dependencies over $K$-teams.
  
 \section{First-order logic with formula equality}\label{sect:foeq}


We first review $K$-interpretations for first-order formulae from \cite{gradelarxiv17}.
%
From now on, we consider only commutative semirings. This is necessary to properly interpret quantifiers within this context.
\subsection{First-order interpretations}

Fix a relational vocabulary $\tau= \{R, S, T, ...\}$. We denote by $\ar(R)$ the $\emph{arity}$ of a relational symbol $R$.
A \emph{relational atom} (resp. a \emph{negated relational atom}) is an expression of the form $R(\vec{x})$ (resp. $\neg R(\vec{x})$) where $\vec{x}$ is a sequence of variables of length $\ar(R)$. An \emph{equality atom} (resp. \emph{negated equality atom}) is an expression of the form $x=y$ where $x$ and $y$ are variables. An atom or a negated atom is called a \emph{literal}.
\emph{First-order formulae} are the expressions formed by closing atoms
 by quantifiers
$\exists,\forall$ and connectives $\land,\lor,\neg$ in the usual way. 
%
We use $\phi \to \psi$ as a shorthand for $\neg \phi\lor \psi$, and $\phi \leftrightarrow \psi$ as a shorthand for $(\phi \to \psi) \land (\psi \to \phi)$. 
The \emph{set of free variables} $\Fr(\theta)$ of an $\FO$ formula $\theta$ is defined in the usual way. 
We also write $\nnf$ for the standard negation normal form transformation of first-order formulae.

Let $A$ be a set. An expression of the form $R(\vec{a})$ (resp. $\neg R(\vec{a})$), where $\vec{a}\in A^{\ar(R)}$, is 
called a \emph{fact} (resp. \emph{negated fact}) over $A$.
The set of \emph{literals} over $A$, denoted by $\lit{A}$, is the set comprising all facts and negated facts over $A$.
\begin{definition}[\cite{gradelarxiv17}]\label{def:fointer}
Fix a semiring ${K}=(K,+,\cdot, 0, 1)$. A $K$-interpretation is a mapping $\pi\colon \lit{A} \to K$. Given variable assignments $s\colon \Var \to A$, it is extended to $\FO$ formulae as follows:
\begin{align*}
     \eval{R(\vec{x})}{\pi}{s}&= \pi(R(s(\vec{x})))  
     &  \eval{\phi \land \psi}{\pi}{s}  & = \eval{\phi}{\pi}{s} \cdot \eval{ \psi}{\pi}{s}   \\
     \eval{\neg R(\vec{x})}{\pi}{s}&= \pi(\neg R(s(\vec{x})))
     &   \eval{\phi \lor \psi}{\pi}{s} & = \eval{\phi}{\pi}{s} + \eval{ \psi}{\pi}{s}  \\
    \eval{\forall x\phi}{\pi}{s} & =    \prod_{a \in A} \eval{\phi }{\pi}{s[a/x]} 
    &  \eval{\exists x\phi}{\pi}{s} & =    \sum_{a \in A} \eval{\phi }{\pi}{s[a/x]}\\
     \eval{\neg \phi}{\pi}{s} & = \eval{\nnf(\neg\phi)}{\pi}{s} &
       \eval{x * y}{\pi}{s} &=  
     \begin{cases}
     1 &\hspace{-2mm}\text{if }s(x)*s(y)\\
     0 &\hspace{-2mm}\text{otherwise},
     \end{cases}
  \end{align*}
where $*\in \{=, \neq\}$. If $\Fr(\phi)$ is empty, then $\phi$ is called a \emph{sentence}. For sentences $\phi$, we write
$\evalX{\phi}{\pi}$ as a shorthand for $\eval{\phi}{\pi}{s_\emptyset}$, where $s_\emptyset$ is the empty assignment.
\end{definition}
A $K$-interpretation $\pi$ is called \emph{model-defining} \cite{gradelarxiv17} if for all facts $R(\vec{a})$ it holds that exactly one of $R(\vec{a})$ and $\neg R(\vec{a})$ is mapped to $0$ by $\pi$, while the other is mapped to a value different from $0$. 

The compositional interpretation entails that semiring homomorphisms extend to formula interpretations. This property will be used in this paper
to analyse mutual relationships between different interpretations of dependency concepts.

\begin{proposition}[\cite{gradelarxiv17}]
Let $h$ be a semiring homomorphism from $K_1$ to $K_2$, and let $\pi_1:\lit{A}\to K_1$ and $\pi_2:\lit{A} \to K_2$ be interpretations such that $h\circ \pi_1 = \pi_2$. Then, $h(\evalX{\phi}{\pi_1}) = \evalX{\phi}{\pi_2}$ for every first-order sentence $\phi$.
 \end{proposition}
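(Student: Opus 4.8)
The plan is to prove the statement by structural induction on the first-order sentence $\phi$, exploiting the fact that a semiring homomorphism $h$ commutes with the operations $+$, $\cdot$, $0$, and $1$ that define the compositional $K$-semantics in Definition~\ref{def:fointer}. Since the claim quantifies over sentences but the inductive definition runs over formulae with free variables, I would actually prove the stronger statement that $h(\eval{\phi}{\pi_1}{s}) = \eval{\phi}{\pi_2}{s}$ for \emph{every} first-order formula $\phi$ and every assignment $s\colon \Var \to A$, under the hypothesis $h\circ \pi_1 = \pi_2$; the sentence case then follows by specialising to $s = s_\emptyset$.

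For the base cases I would treat the literals and the equality atoms. For a relational atom $R(\vec{x})$ we have $\eval{R(\vec{x})}{\pi_1}{s} = \pi_1(R(s(\vec{x})))$, so $h(\eval{R(\vec{x})}{\pi_1}{s}) = h(\pi_1(R(s(\vec{x})))) = \pi_2(R(s(\vec{x}))) = \eval{R(\vec{x})}{\pi_2}{s}$, using $h\circ\pi_1 = \pi_2$; the negated relational atom is identical with $\neg R$ in place of $R$. For an equality or inequality atom $x * y$ with $* \in \{=,\neq\}$, the value is either $0$ or $1$ depending only on whether $s(x) * s(y)$ holds, independently of the interpretation; since a semiring homomorphism satisfies $h(0)=0$ and $h(1)=1$, both sides agree. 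For the inductive step on $\land$ and $\lor$ I would use multiplicativity and additivity of $h$: $h(\eval{\phi\land\psi}{\pi_1}{s}) = h(\eval{\phi}{\pi_1}{s}\cdot\eval{\psi}{\pi_1}{s}) = h(\eval{\phi}{\pi_1}{s})\cdot h(\eval{\psi}{\pi_1}{s})$, which equals $\eval{\phi}{\pi_2}{s}\cdot\eval{\psi}{\pi_2}{s} = \eval{\phi\land\psi}{\pi_2}{s}$ by the induction hypothesis, and dually for $\lor$.

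The quantifier cases are where the main work lies, since they involve possibly infinite sums and products ranging over $A$. Assuming the underlying set $A$ is finite (as is standard for these finite sum/product semantics, and as the notation $\prod_{a\in A}$ and $\sum_{a\in A}$ tacitly requires), the existential case follows because $h$ preserves finite sums: $h(\eval{\exists x\,\phi}{\pi_1}{s}) = h\!\left(\sum_{a\in A}\eval{\phi}{\pi_1}{s[a/x]}\right) = \sum_{a\in A} h(\eval{\phi}{\pi_1}{s[a/x]})$, and applying the induction hypothesis to each summand gives $\sum_{a\in A}\eval{\phi}{\pi_2}{s[a/x]} = \eval{\exists x\,\phi}{\pi_2}{s}$. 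The universal case is symmetric, using that $h$ preserves finite products. The subtle point, and the step I expect to be the main obstacle, is the negation clause: by definition $\eval{\neg\phi}{\pi}{s} = \eval{\nnf(\neg\phi)}{\pi}{s}$, so negation is not handled by a direct homomorphism identity but is instead pushed inward to the negation normal form. I would handle this by arranging the induction to proceed over formulae in negation normal form (equivalently, taking $\nnf(\neg\phi)$ as the genuinely smaller object, since $\nnf$ drives negations down to the literal level where they become negated atoms already covered by the base case), so that the $\neg$ clause reduces to an already-established case rather than requiring $h$ to respect a negation operation that a bare semiring homomorphism need not possess.
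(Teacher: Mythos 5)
Your proof is correct and takes the same route as the standard argument: the paper itself states this proposition as an imported result from Gr\"adel and Tannen without reproving it, and the proof in that source is precisely the structural induction you describe, strengthened to open formulae and assignments. Your explicit handling of the two subtleties usually left tacit --- finiteness of $A$ so that $h$ need only preserve finite sums and products in the quantifier cases, and routing the $\neg$ clause through $\nnf$ so the induction runs over negation normal form formulae with negation confined to literals --- is exactly right, since a bare semiring homomorphism has no negation to respect.
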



Let $\mA$ be a standard first-order structure over $\tau$, and let $\mathbb{B}$ be the Boolean semiring. The interpretation $\pi$ that maps relational facts $R(\vec{a})$ (resp. negated relational facts $\neg R(\vec{a})$) to $1$ (resp. $0$) if $\vec{a} \in R^{\mA}$, and otherwise to $0$ (resp. $1$), is called the \emph{canonical truth interpretation} of $\mA$,  denoted $\pi_\mA$.
\begin{proposition}[\cite{gradelarxiv17}]\label{prop:can}
Let $\phi$ be a first-order sentence, 
and $\mA$ a structure. Then $\mA\models \phi$ if and only if $\evalX{\phi}{\pi_\mA} =1 $.
\end{proposition}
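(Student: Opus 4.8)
The plan is to prove the equivalence by structural induction, but since the semiring semantics handles negation only through the clause $\eval{\neg\phi}{\pi}{s}=\eval{\nnf(\neg\phi)}{\pi}{s}$, I would first reduce to negation normal form. Observe that for \emph{every} $\FO$ formula $\phi$, assignment $s$, and interpretation $\pi$ one has $\eval{\phi}{\pi}{s}=\eval{\nnf(\phi)}{\pi}{s}$. This is immediate by induction on $\phi$: literals are fixed by $\nnf$; the connective and quantifier cases follow because $\nnf$ commutes with $\land,\lor,\exists,\forall$ and the evaluation clauses are defined componentwise; and the case $\phi=\neg\psi$ holds by the very definition of the negation clause, since $\nnf(\neg\psi)=\nnf(\phi)$. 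As the classical satisfaction relation is likewise invariant under $\nnf$ (i.e. $\mA\models_s\phi$ iff $\mA\models_s\nnf(\phi)$), it suffices to prove the desired equivalence for formulas that are already in negation normal form, where negation occurs only in front of atoms.

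The core is then the following claim, proved by induction on an $\nnf$ formula $\phi$: for every assignment $s$ with $\Fr(\phi)\subseteq\Dom(s)$ whose range is contained in $A$, we have $\eval{\phi}{\pi_\mA}{s}=1$ if $\mA\models_s\phi$ and $\eval{\phi}{\pi_\mA}{s}=0$ otherwise. Note that every value lies in $\SB=\{0,1\}$, so the two outcomes are genuinely complementary. For the base cases, a relational literal $R(\vec{x})$ (resp. $\neg R(\vec{x})$) evaluates, by definition of the canonical truth interpretation $\pi_\mA$, to $1$ exactly when $s(\vec{x})\in R^{\mA}$ (resp. $s(\vec{x})\notin R^{\mA}$), which matches its classical meaning; the equality literals $x=y$ and $x\neq y$ are handled directly by the dedicated $x*y$ clause. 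For the connective cases I use that in $\SB$ the semiring product $\cdot$ is $\land$ and the sum $+$ is $\lor$, so $\eval{\phi\land\psi}{\pi_\mA}{s}=1$ iff both conjuncts evaluate to $1$, and $\eval{\phi\lor\psi}{\pi_\mA}{s}=1$ iff some disjunct does; applying the induction hypothesis transfers these to the classical semantics of $\land$ and $\lor$.

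For the quantifier cases the key observation is that in $\SB$ the iterated product $\prod_{a\in A}$ is the meet $\bigwedge_{a\in A}$, which equals $1$ precisely when every factor is $1$, and the iterated sum $\sum_{a\in A}$ is the join $\bigvee_{a\in A}$, which equals $1$ precisely when some summand is $1$. Combined with the induction hypothesis applied to the assignments $s[a/x]$, this exactly reproduces the classical semantics of $\forall x\,\phi$ and $\exists x\,\phi$. Specialising the claim to a sentence $\phi$ and the empty assignment $s_\emptyset$ gives $\evalX{\phi}{\pi_\mA}=1$ iff $\mA\models\phi$, which, together with the $\nnf$ reduction of the first paragraph, yields the proposition.

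The one point deserving care is the quantifier step over a possibly infinite domain $A$: the infinite $\sum$ and $\prod$ appearing in the clauses need not be defined in an arbitrary semiring, but in $\SB$ both are the join, respectively meet, of a family in a complete lattice whose operations are idempotent, and hence are always well defined. Consequently no finiteness assumption on $A$ is needed. I expect this well-definedness over $\mathbb{B}$, together with keeping the $\nnf$ bookkeeping consistent with the definition of the negation clause, to be the only genuine subtleties; the remaining work is a routine transfer of the Boolean operations through the semiring evaluation.
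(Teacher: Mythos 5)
Your proof is correct. Note that the paper itself offers no proof of this proposition---it is imported by citation from Gr\"adel and Tannen---and your argument (reduce to negation normal form using the definitional clause $\eval{\neg\phi}{\pi}{s}=\eval{\nnf(\neg\phi)}{\pi}{s}$, then a routine structural induction showing $\eval{\phi}{\pi_\mA}{s}\in\{0,1\}$ tracks the Tarski semantics clause by clause, with idempotence of $\lor,\land$ in $\SB$ making the quantifier sums and products well defined even over infinite domains) is exactly the standard argument from the cited source, so there is nothing to flag beyond observing that your care about infinite $A$ is a genuine, if minor, extra precision over the usual finite-structure setting.
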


\subsection{Formula (in)equality}\label{Sec_foieq} 
To express dependencies logically in a general semiring context, we introduce equality and inequality over FO formulae. 
Given $\phi,\psi \in \FO$, 
we extend the $K$-interpretation as follows:
\[
       \eval{\phi * \psi}{\pi}{s} = 
     \begin{cases}
     1 &\text{ if }\eval{\phi}{\pi}{s} *\eval{\psi}{\pi}{s} \\
     0 & \text{ otherwise},
     \end{cases}
\]
where $* \in \{=,\neq, \leq, \not\leq \}$.
For the (negated) formula inequality, we assume $(K,\leq)$ is a partially ordered semiring.
We write $\zt$ and $\nzt$ to denote the formula equalities of the form $\phi = \bot$ and $\phi \neq \bot$, respectively.
For $C \subseteq \{\zt, \nzt, =,\neq, \leq, \not\leq \}$, we let $\FO(C)$ denote the extension of the logic of Gr\"adel and Tannen with the formula equalities and inequalities in $C$ occuring positively (i.e. in the scope of even number of negations) and without nesting.
The \emph{set of free variables} for a formula of the form $\phi *\psi$, $*\in \{=,\neq, \leq, \not\leq \}$, is defined as $\Fr(\phi * \psi) = \Fr(\phi) \cup \Fr( \psi)$.

We extend $\nnf$ to formula (in)equalities by setting $\nnf (\neg(\phi = \psi))\dfn (\nnf(\phi) \neq \nnf(\psi)) $ and $\nnf(\neg(\phi \leq \psi))\dfn (\nnf(\phi) \not\leq \nnf(\psi)) $.
We can then extend the use of shorthands $\phi \to \psi$ and  $\phi \leftrightarrow \psi$ for FO with formula (in)equalities. 


\begin{proposition}\label{prop:posmodeldef}
Let $K$ be a positive semiring, and let $\pi$ be a model-defining $K$-interpretation. Let $\phi$ be a formula of $\fofull$. Then, $\pi(\phi)=0$ if and only if $\pi(\neg\phi)\neq 0$.
\end{proposition}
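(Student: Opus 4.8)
The plan is to prove the statement by structural induction, in the slightly stronger form that for every assignment $s$ \emph{exactly one} of $\eval{\theta}{\pi}{s}$ and $\eval{\nnf(\neg\theta)}{\pi}{s}$ equals $0$, where $\theta$ ranges over formulae of $\fofull$ in negation normal form. The Proposition then follows by taking $\theta = \nnf(\phi)$ and invoking the routine fact that $\nnf$ preserves semiring values, so that $\eval{\phi}{\pi}{s} = \eval{\nnf(\phi)}{\pi}{s}$ and $\eval{\neg\phi}{\pi}{s} = \eval{\nnf(\neg\phi)}{\pi}{s}$; indeed "$\pi(\phi)=0 \iff \pi(\neg\phi)\neq 0$" is just a restatement of "exactly one side is $0$". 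I would first record that a model-defining $\pi$ requires a nonzero value, hence $K$ is nontrivial and $0\neq 1$, which is needed to separate the two $\{0,1\}$-values below.

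For the base cases I would argue complementarity directly. For a relational literal $R(\vec x)$ (and dually $\neg R(\vec x)$), model-definedness says exactly one of $\pi(R(s(\vec x)))$ and $\pi(\neg R(s(\vec x)))$ is $0$, which is precisely the claim. For a variable-equality literal $x=y$ (and dually $x\neq y$), the two atoms take the values $1$ and $0$ in some order, since exactly one of $s(x)=s(y)$, $s(x)\neq s(y)$ holds. For a formula (in)equality atom $\alpha * \beta$ with $* \in \{=,\neq,\leq,\not\leq\}$, its nnf-negation is $\nnf(\alpha)\,*'\,\nnf(\beta)$ with $*'$ the complementary comparison; since the no-nesting restriction makes $\alpha$ and $\beta$ pure first-order, $\nnf$ preserves their values, so both atoms compare the same pair of semiring elements under complementary relations ($=$ versus $\neq$, or $\leq$ versus $\not\leq$) and therefore evaluate to complementary $\{0,1\}$-values.

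The inductive cases are where positivity of $K$ does the work. For $\theta = \psi_1 \land \psi_2$ we have $\eval{\theta}{\pi}{s} = \eval{\psi_1}{\pi}{s}\cdot\eval{\psi_2}{\pi}{s}$ while $\eval{\nnf(\neg\theta)}{\pi}{s} = \eval{\neg\psi_1}{\pi}{s} + \eval{\neg\psi_2}{\pi}{s}$. If both conjuncts are nonzero, the product is nonzero because $K$ has no divisors of $0$, and both negated values vanish by the induction hypothesis, so their sum is $0$; if some conjunct is $0$ the product is $0$, while the corresponding negated value is nonzero by the induction hypothesis, so the sum is nonzero by $+$-positivity. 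Either way exactly one side vanishes. The case $\theta = \psi_1\lor\psi_2$ is dual, and the quantifier cases $\exists x\psi$, $\forall x\psi$ are the same argument applied to the finite sums and products over the domain $A$: a finite product of nonzero elements stays off $0$ (no divisors of $0$), a finite sum is nonzero as soon as one summand is (iterating $+$-positivity), and the empty domain yields the consistent boundary values $0$ and $1$.

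The main obstacle I anticipate is the bookkeeping around negation: because "$\neg\theta$" is interpreted through $\nnf$, the induction hypothesis must be carried for the dual pair $(\theta,\nnf(\neg\theta))$, and the syntactic dualities ($\land\leftrightarrow\lor$, $\exists\leftrightarrow\forall$, $=\leftrightarrow\neq$, $\leq\leftrightarrow\not\leq$) must be matched correctly against the semiring operations $\cdot$ and $+$. Beyond this bookkeeping, the only genuinely semantic inputs are positivity, used in exactly two guises — no divisors of $0$ to keep products off $0$, and $+$-positivity to keep sums off $0$ — together with model-definedness to seed the relational base case.
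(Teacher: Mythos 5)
Your proof is correct and takes essentially the same route as the paper's: a structural induction in which model-definedness settles the atomic case and the two halves of positivity do the inductive work ($+$-positivity for sums arising from $\lor$ and $\exists$, absence of zero-divisors for products arising from $\land$ and $\forall$). The only difference is level of detail — you make explicit the $\nnf$ bookkeeping, the formula-(in)equality atoms, and the nontriviality of $K$, all of which the paper compresses into ``the remaining cases are analogous.''
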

\begin{proof}
The proof is by structural induction. If $\phi$ is an atom, the statement follows by the assumption that $\pi$ is model-defining. If $\phi$ is of the form $\psi_0 \lor \psi_1$, then $\pi(\phi)=0$ if and only if $\pi(\psi_0)=0=\pi(\psi_1)$ if and only if $\pi(\neg \psi_0)\neq 0\neq \pi(\neg \psi_1)$ if and only if $\pi(\neg \phi)\neq 0$. The first and third ``if and only if'' follow by positivity of $K$, and the second by induction assumption. The remaining cases are analogous.
\end{proof}

Two sentences $\phi$ and $\psi$ are \emph{$K$-equivalent}, written $\phi \equiv_K \psi$,  if $\evalX{\phi}{\pi}=\evalX{\psi}{\pi}$ for all {\color{black}model-defining} $K$-interpretations $\pi$. The sentences $\phi$ and $\psi$ are \emph{equivalent},  
written $\phi \equiv \psi$, 
if they are $K$-equivalent for all semirings 
 $K$.
Two logics $\mathfrak{L}$ and $\mathfrak{L}'$ are \emph{equally expressive under $K$} (resp. \emph{equally expressive}), denoted $\mathfrak{L}\equiv_K \mathfrak{L}'$ (resp. $\mathfrak{L}\equiv \mathfrak{L}'$), if all sentences from $\mathfrak{L}$ are $K$-equivalent (resp. equivalent) to some sentence from $\mathfrak{L}'$, and conversely all sentences from $\mathfrak{L}'$ are $K$-equivalent (resp. equivalent) to some sentence from $\mathfrak{L}$.

The following is a consequence of Proposition~\ref{prop:posmodeldef}.

\begin{proposition}\label{prop:caneq}
If $\phi$ and $\psi$ are $\FO$-formulae with formula (in)equalities, then
$\phi \leq \psi\equiv_{\SB}\phi \to \psi$ and $\phi = \psi \equiv_{\SB} \phi \leftrightarrow \psi$. 
\end{proposition}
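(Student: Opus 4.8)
The plan is to unfold both $\SB$-equivalences directly from Definition~\ref{def:fointer} and the clauses for formula (in)equality, exploiting that the Boolean semiring carries only the two values $0$ and $1$. Since $\phi\leq\psi$, $\phi\to\psi$, $\phi=\psi$ and $\phi\leftrightarrow\psi$ may contain free variables, I first read $\equiv_{\SB}$ for open formulae as: the values $\eval{\cdot}{\pi}{s}$ agree for every model-defining $\SB$-interpretation $\pi$ and every assignment $s$. So I fix such a $\pi$ and $s$ and abbreviate $p\dfn\eval{\phi}{\pi}{s}$ and $q\dfn\eval{\psi}{\pi}{s}$, noting $p,q\in\{0,1\}$ because $\SB=\{0,1\}$.

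Two observations do all the work. First, the natural order of $\SB$: from $a\leq_{\SB}b \iff \exists c\,(a\lor c=b)$ one reads off exactly $0\leq 0$, $0\leq 1$, $1\leq 1$ but $1\not\leq 0$; hence $p\leq q$ fails precisely when $p=1$ and $q=0$. Second, since $\SB$ is positive and $\pi$ is model-defining, Proposition~\ref{prop:posmodeldef} applies to every formula of $\fofull$, giving that negation acts as Boolean complement: $\eval{\neg\phi}{\pi}{s}=1$ iff $p=0$. This is the only genuinely non-bookkeeping ingredient, and it is exactly where positivity and model-definedness are needed — without them $\neg$ need not complement $p$ and the equivalences can break.

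With these in hand the computation is immediate. Using $\phi\to\psi\dfn\neg\phi\lor\psi$ together with the clause $\eval{\alpha\lor\beta}{\pi}{s}=\eval{\alpha}{\pi}{s}+\eval{\beta}{\pi}{s}$ and $+=\lor$, I get $\eval{\phi\to\psi}{\pi}{s}=\eval{\neg\phi}{\pi}{s}\lor q$, which equals $1$ iff $p=0$ or $q=1$; this is the negation of ``$p=1$ and $q=0$'', i.e.\ exactly $\eval{\phi\leq\psi}{\pi}{s}$. For the second equivalence I expand $\phi\leftrightarrow\psi\dfn(\phi\to\psi)\land(\psi\to\phi)$ and use $\cdot=\land$ to write its value as the product of the two implication values; by the previous computation this product is $1$ iff ($p=0$ or $q=1$) and ($q=0$ or $p=1$). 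A four-line case split over $(p,q)\in\{0,1\}^2$ shows this holds iff $p=q$, which is $\eval{\phi=\psi}{\pi}{s}$. Since $\pi$ and $s$ were arbitrary, both $\SB$-equivalences follow.

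I expect no serious obstacle: the argument is a finite case analysis once the order of $\SB$ and the complement behaviour of $\neg$ are pinned down. The only points demanding care are invoking Proposition~\ref{prop:posmodeldef} correctly — verifying that $\phi,\psi\in\fofull$ so the complement law is available — and restricting attention to model-defining interpretations as required by the definition of $\equiv_{\SB}$.
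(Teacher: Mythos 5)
Your proof is correct and takes essentially the same route as the paper: the paper gives no explicit argument, stating the proposition as ``a consequence of Proposition~\ref{prop:posmodeldef},'' and your write-up is precisely that derivation spelled out — positivity of $\SB$ plus model-definingness make $\neg$ act as Boolean complement, the natural order on $\SB$ gives $p\leq q$ iff not ($p=1$ and $q=0$), and the rest is the finite case analysis over $\{0,1\}^2$. Your explicit handling of open formulae (quantifying over assignments $s$ as well as model-defining interpretations $\pi$) fills a detail the paper leaves implicit, since $\equiv_{\SB}$ is officially defined only for sentences.
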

\begin{corollary}
$\fofull\equiv_{\SB}\FO$.
\end{corollary}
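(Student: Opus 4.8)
The plan is to prove the two halves of $\fofull \equiv_{\SB} \FO$ separately. One direction is immediate: every $\FO$ sentence is already a sentence of $\fofull$, and it is trivially $\SB$-equivalent to itself. All the content lies in the converse, namely that every sentence of $\fofull$ is $\SB$-equivalent to some plain $\FO$ sentence. I would obtain such a sentence by eliminating the formula (in)equality atoms, using Proposition~\ref{prop:caneq} as the local replacement rule.

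Concretely, given a $\fofull$ sentence $\Theta$, I would first observe that I may assume $\Theta$ is in negation normal form, since the clause $\eval{\neg\phi}{\pi}{s} = \eval{\nnf(\neg\phi)}{\pi}{s}$ of Definition~\ref{def:fointer} guarantees that $\Theta$ and $\nnf(\Theta)$ receive the same value under every interpretation. Because the formula (in)equality atoms occur positively and unnested, in $\nnf(\Theta)$ each of them appears as an unnegated atom of one of the forms $\phi = \psi$, $\phi \neq \psi$, $\phi \leq \psi$, $\phi \not\leq \psi$ with $\phi,\psi \in \FO$, while negations survive only in front of ordinary first-order literals. I would then define an $\FO$ sentence $\Theta^{*}$ by replacing each such atom according to $\phi = \psi \mapsto \phi \leftrightarrow \psi$, $\phi \leq \psi \mapsto \phi \to \psi$, $\phi \neq \psi \mapsto \neg(\phi \leftrightarrow \psi)$, and $\phi \not\leq \psi \mapsto \neg(\phi \to \psi)$. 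Since the atoms do not nest, $\Theta^{*}$ is a genuine $\FO$ sentence.

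The core of the argument is to show $\evalX{\Theta}{\pi} = \evalX{\Theta^{*}}{\pi}$ for every model-defining $\SB$-interpretation $\pi$, which I would establish by induction on the structure of $\nnf(\Theta)$. The base cases are the four atom shapes: the $=$ and $\leq$ cases are exactly Proposition~\ref{prop:caneq}, and the $\neq$ and $\not\leq$ cases follow from it by complementation, using that $\SB$ is positive so that Proposition~\ref{prop:posmodeldef} gives $\eval{\gamma}{\pi}{s}=0 \iff \eval{\neg\gamma}{\pi}{s}\neq 0$ for the relevant first-order formula $\gamma$; ordinary literals are unchanged and hence trivial. The inductive cases for $\land, \lor, \exists, \forall$ are immediate from the compositional definition of the interpretation, because over $\SB$ the operations $\cdot, +, \prod, \sum$ are just $\land, \lor, \bigwedge, \bigvee$, and these respect the pointwise equalities supplied by the induction hypothesis.

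The step I expect to be the main obstacle is making the base case interact correctly with the quantifier cases. Proposition~\ref{prop:caneq} and the notion $\equiv_{\SB}$ are phrased for sentences, whereas the atoms $\phi = \psi$ and the like generally carry free variables and sit inside quantifiers of $\Theta$. The induction therefore needs the replacement equalities to hold \emph{at the level of individual assignments}, i.e. $\eval{\phi = \psi}{\pi}{s} = \eval{\phi \leftrightarrow \psi}{\pi}{s}$ for all $s$ and all model-defining $\pi$, not merely for closed instances. I would handle this by noting that the proof of Proposition~\ref{prop:caneq} is pointwise in $s$ (the Boolean operations involved are evaluated assignment-by-assignment), so the equivalences lift verbatim to the open-formula setting the compositional induction requires. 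Once this is in place, applying the induction to $\Theta = \nnf(\Theta)$ yields $\evalX{\Theta}{\pi} = \evalX{\Theta^{*}}{\pi}$ for all model-defining $\pi$, that is $\Theta \equiv_{\SB} \Theta^{*}$, completing the nontrivial inclusion.
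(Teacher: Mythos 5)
Your proof is correct and follows essentially the route the paper intends: the corollary is stated without explicit proof as an immediate consequence of Proposition~\ref{prop:caneq}, i.e.\ replacing each formula (in)equality atom by the corresponding $\to$/$\leftrightarrow$ construct. Your elaborations --- handling $\neq$ and $\not\leq$ by complementation via Proposition~\ref{prop:posmodeldef} (which Proposition~\ref{prop:caneq} does not state explicitly), and noting that the atom-level equivalences must hold pointwise in the assignment $s$ so that the compositional induction goes through under quantifiers --- are exactly the details the paper leaves implicit, correctly filled in.
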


\subsection{$K$-atoms}\label{sec:Katoms}
We are now ready to explore the idea of using logical statements as definitions of dependencies across various semirings.
To do so, we will consider an atom $\alpha$, like the dependence or independence atom, and define its interpretation over $K$-teams by referencing a definition of $\alpha$ stated in first-order logic with formula (in)equalities.

Consider a relation symbol $R$ that does not belong to $\tau$ (and can be of any arity).
For a tuple $\vec{a}=(a_1, \dots ,a_{\ar(R)})$ and a tuple of indices $\vec{i}=(i_1,\dots ,i_k)$ from $1,\dots,\ar(R)$, 
we define $\vec{a}_{\vec{i}} \dfn (a_{i_1}, \dots ,a_{i_k})$.
For tuples of indices $\vec{i}_1,\dots ,\vec{i}_n$ from $1,\dots,\ar(R)$ and variable tuples $\vec{u}_1, \dots ,\vec{u}_n$ (such that the length of $\vec{u}_l$  is that of $\vec{i}_l$, for each $l\leq n$), we define a shorthand
\[
\theta_{\vec{i}_1, \dots ,\vec{i}_n}(\vec{u}_1, \dots ,\vec{u}_n) \dfn \exists \vec{x}(R(\vec{x}) \land \vec{x}_{\vec{i}_1}=\vec{u}_1\land \dots \land\vec{x}_{\vec{i}_n}=\vec{u}_n).
\]
This shorthand formula expresses that there exists an $R$-fact such that its projections on sequences of positions
$\vec{i}_1, \dots ,\vec{i}_n$ are $\vec{u}_1, \dots ,\vec{u}_n$.
Considering dependence, independence, and inclusion atoms, we now define the following sentences:
\begin{align*}
\phi^{\vec{i}}_{\lite{S}} \dfn& \, \forall \vec{x} \big( R(\vec{x}) = \bot \lor \big (R(\vec{x})\neq \bot \land S(\vec{x}_{\vec{i}}) \big)\\
%
\phi^{\vec{i},\vec{j}}_{\rm dep} \dfn& \,
\forall\vec{u}\vec{v}\vec{w}\Big(\big(\theta_{\vec{i},\vec{j}}(\vec{u},\vec{v})\land \theta_{\vec{i},\vec{j}}(\vec{u},\vec{w})\big) = \bot 
 \lor (
\vec{v}=\vec{w} ) \neq \bot \Big) \\
%
\phi^{\vec{i},\vec{j},\vec{k}}_{\rm indep} \dfn& \, \forall {\vec{u}}\vec{v}\vec{w}\Big(\big(\theta_{\vec{i},\vec{j}}(\vec{u},\vec{v})\land \theta_{\vec{i},\vec{k}}(\vec{u},\vec{w})\big)\\
&\hspace{3cm}=\big(\theta_{\vec{i}}(\vec{u})\land\theta_{\vec{i},\vec{j},\vec{k}}(\vec{u},\vec{v},\vec{w})\big)\Big)\\
\phi^{\vec{i},\vec{j}}_{\rm inc} \dfn& \, \forall \vec{u} \Big(\theta_{\vec{i}}(\vec{u}) \leq \theta_{\vec{j}}(\vec{u})\Big).
\end{align*}


In the superscript, we may replace each unary tuple $(i)$ with $i$; e.g., write $\phi^{i}_{\lite{S}}$ instead of $\phi^{(i)}_{\lite{S}}$.
The above formulae can often be simplified, as illustrated next.
\begin{example}\label{ex:depstatements}
If $R$ is ternary then $\phi^{1,2,3}_{\rm indep}$ is of the form
 $\forall uvw\Big(\big(\theta_{1,2}({u},{v})\land \theta_{1,3}({u},{w})\big)
=\big(\theta_{{1}}({u})\land\theta_{{1},{2},{3}}({u},{v},{w})\big)\Big)$,
where $\theta_{1,2}(u,v)$, $\theta_{1,3}(u,w)$, $\theta_{1}(u,w)$, and $\theta_{1,3}(u,w)$ are respectively of the form
\begin{align*}
&\exists x_1x_2x_3  (R(x_1,x_2,x_3)\land x_1 ={u} \land x_2 ={v}),\\
&\exists x_1x_2x_3  (R(x_1,x_2,x_3)\land x_1 ={u} \land x_3=w),\\
&\exists x_1x_2x_3  (R(x_1,x_2,x_3)\land x_1 ={u}),\\
&\exists x_1x_2x_3  (R(x_1,x_2,x_3)\land x_1 ={u} \land x_2 ={v}\land x_3=w).
\end{align*}
Clearly, these sentences are equivalent to the simpler forms $
\exists x_3  R(u,v,x_3)$, 
$\exists x_2  R(u,x_2,w)$, 
$\exists x_2x_3  R(u,x_2,x_3),$ and $
R(u,v,w)
$, respectively.
 These equivalences can then be used to rewrite $\phi^{1,2,3}_{\rm indep}$ more succinctly.
\end{example}

It can now be observed that $\phi_{\rm dep}$, $\phi_{\rm indep}$ and $\phi_{\rm inc}$ are $\SB$-equivalent to the standard relational definitions of dependence, independence, and inclusion atoms. The following proposition is a consequence of Proposition \ref{prop:caneq}. It can be proven by imitating the reasoning in Example \ref{ex:logicaldef}.
\begin{proposition}\label{prop:fodefs}
The following equivalences hold:
\begin{align*}
\phi^{\vec{i}}_{\lite{S}} &\equiv_{\SB} \forall \vec{x} \big( R(\vec{x}) \rightarrow S(\vec{x}_i) \big) \\
\phi^{\vec{i},\vec{j}}_{\rm dep} &\equiv_{\SB} \forall\vec{u}\vec{v}\vec{w}(\theta_{\vec{i},\vec{j}}(\vec{u},\vec{v})\land \theta_{\vec{i},\vec{j}}(\vec{u},\vec{w})\to \vec{v}=\vec{w}) \\
\phi^{\vec{i},\vec{j},\vec{k}}_{\rm indep} &\equiv_{\SB} \forall {\vec{u}}\vec{v}\vec{w}(\theta_{\vec{i},\vec{j}}(\vec{u},\vec{v})\land \theta_{\vec{i},\vec{k}}(\vec{u},\vec{w})\to \theta_{\vec{i},\vec{j},\vec{k}}(\vec{u},\vec{v},\vec{w})) \\ 
\phi^{\vec{i},\vec{j}}_{\rm inc} &\equiv_{\SB} \forall \vec{u} (\theta_{\vec{i}}(\vec{u}) \to \theta_{\vec{j}}(\vec{u}))
\end{align*}
\end{proposition}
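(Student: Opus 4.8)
The plan is to treat all four equivalences uniformly: first eliminate every formula (in)equality in favour of an ordinary first-order connective by invoking Proposition~\ref{prop:caneq}, and then simplify the resulting $\FO$ formula by routine Boolean reasoning inside the scope of the leading universal quantifiers. The legitimacy of replacing a subformula by an $\SB$-equivalent one is the one structural point I would record up front: under any model-defining Boolean interpretation every subformula takes a value in $\{0,1\}$, and the clauses for $\land,\lor,\neg,\forall,\exists$ then coincide with the corresponding Boolean operations, so substitution of a subformula by one that agrees with it under all model-defining $\SB$-interpretations and all assignments preserves the value of the whole formula. All (in)equalities in the four definitions occur positively and without nesting, so this substitution principle applies directly to each atomic (in)equality in turn.

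Next I would dispatch the three easy cases. From the clause for formula (in)equality together with the fact that $\bot$ evaluates to $0$, we get (as in Proposition~\ref{prop:caneq}) that $\psi=\bot \equiv_{\SB} \neg\psi$ and $\psi\neq\bot \equiv_{\SB} \psi$. Applying these to $\phi^{\vec{i}}_{\lite{S}}$ turns its matrix into $\neg R(\vec{x}) \lor (R(\vec{x}) \land S(\vec{x}_{\vec{i}}))$, which collapses to $R(\vec{x}) \to S(\vec{x}_{\vec{i}})$ by absorption. Likewise the matrix of $\phi^{\vec{i},\vec{j}}_{\rm dep}$ becomes $\neg(\theta_{\vec{i},\vec{j}}(\vec{u},\vec{v}) \land \theta_{\vec{i},\vec{j}}(\vec{u},\vec{w})) \lor \vec{v}=\vec{w}$, i.e.\ the stated implication. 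For $\phi^{\vec{i},\vec{j}}_{\rm inc}$, Proposition~\ref{prop:caneq} replaces $\leq$ by $\to$ outright, yielding $\forall\vec{u}(\theta_{\vec{i}}(\vec{u}) \to \theta_{\vec{j}}(\vec{u}))$.

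The case needing a genuine logical observation is $\phi^{\vec{i},\vec{j},\vec{k}}_{\rm indep}$. Here Proposition~\ref{prop:caneq} rewrites the formula equality $A = B$ as the biconditional $A \leftrightarrow B$, where $A \dfn \theta_{\vec{i},\vec{j}}(\vec{u},\vec{v}) \land \theta_{\vec{i},\vec{k}}(\vec{u},\vec{w})$ and $B \dfn \theta_{\vec{i}}(\vec{u}) \land \theta_{\vec{i},\vec{j},\vec{k}}(\vec{u},\vec{v},\vec{w})$, whereas the target is merely the implication $A \to \theta_{\vec{i},\vec{j},\vec{k}}(\vec{u},\vec{v},\vec{w})$. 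The crux is the monotonicity of the shorthand $\theta$ under dropping conjuncts: a witnessing $R$-fact for a longer index list also witnesses any sublist, so $\theta_{\vec{i},\vec{j},\vec{k}}(\vec{u},\vec{v},\vec{w})$ entails each of $\theta_{\vec{i},\vec{j}}(\vec{u},\vec{v})$, $\theta_{\vec{i},\vec{k}}(\vec{u},\vec{w})$, and $\theta_{\vec{i}}(\vec{u})$. Hence $B$ already entails $A$, so the direction $B \to A$ is vacuous and $A \leftrightarrow B \equiv_{\SB} A \to B$; and since $\theta_{\vec{i},\vec{j}}(\vec{u},\vec{v})$ (a conjunct of $A$) already entails $\theta_{\vec{i}}(\vec{u})$, the conjunct $\theta_{\vec{i}}(\vec{u})$ of $B$ is redundant under the hypothesis $A$, giving $A \to B \equiv_{\SB} A \to \theta_{\vec{i},\vec{j},\vec{k}}(\vec{u},\vec{v},\vec{w})$. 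Chaining the two gives the stated equivalence.

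I expect the main obstacle to be exactly this last step. The two entailments between $\theta$-shorthands unwind to first-order reasoning about existential quantification over a shared witness $\vec{x}$, and the point requiring care is that they must be verified under every model-defining Boolean interpretation rather than only classically; this is precisely where the $\{0,1\}$-valuedness of Boolean formula values and the substitution principle established in the first paragraph do the work. Everything else is Boolean algebra carried out under the universal quantifiers, so once the monotonicity observation is in place the remaining manipulations are routine.
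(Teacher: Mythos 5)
Your proposal is correct and takes essentially the same route as the paper, which proves the proposition as a consequence of Proposition~\ref{prop:caneq} together with the redundancy-removal reasoning of Example~\ref{ex:logicaldef} (replace each formula (in)equality by the corresponding connective, then simplify). Your explicit substitution principle and the monotonicity of the $\theta$-shorthands under dropping conjuncts are precisely the details the paper leaves implicit, and both are verified correctly.
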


Having formalised key dependency concepts using logical statements, 
let us then move on to $K$-teams.
Now, fix a total order $<$ on the variable set $\Var$.
Let $\KX \colon \as(D,A)\to K$ be a $K$-team with domain $V=\{x_1, \dots ,x_k\}$, where $x_1 < \dots < x_k$. 
For each assignment $s \colon A^V \to K$, define a tuple $\vec{a}_s \dfn (s(x_1), \dots ,s(x_k))$.
Let $R$ be a  relation symbol of arity $k$. 
 Denote by $\pi_{\KX}:\lit{A}\to K$ any 
 $K$-interpretation such that
   $\pi_\KX$ maps $R(\vec{a}_s)$ to $\KX(s)$. 
 For a tuple of variables $\vec{x}=(x_{i_1}, \dots ,x_{i_n})$, write $\vec{i}_{\vec{x}}$ for the integer tuple $(i_1, \dots ,i_n)$. The $K$-interpretation of literals and dependencies is now defined as follows:
\begin{itemize}
\item Literals: $\evalX{T(\vec{x})}{\KX} \dfn \evalX{\phi^{\vec{i}_{\vec{x}}}_{\lite{T}}}{\pi_{\KX}}$.
\item Dependence atoms: $\evalX{\dep{\vec{x},\vec{y}}}{\KX} \dfn \evalX{\phi^{\vec{i}_{\vec{x}},\vec{i}_{\vec{y}}}_{\rm dep}}{\pi_{\KX}}$.
\item Independence atom: $\evalX{\pci{\vec{x}}{\vec{y}}{\vec{z}}}{\KX} \dfn \evalX{\phi^{\vec{i}_{\vec{x}},\vec{i}_{\vec{y}},\vec{i}_{\vec{z}}}_{\rm indep}}{\pi_{\KX}}$.
\item Inclusion atom: $\evalX{\vec{x}\subseteq\vec{y}}{\KX} \dfn \evalX{\phi^{\vec{i}_{\vec{x}},\vec{i}_{\vec{y}}}_{\rm inc}}{\pi_{\KX}}$.
\end{itemize}
\vspace{1mm}
We say that a $K$-team $\KX$ \emph{satisfies} an atom $\alpha$, denoted by $\KX\models \alpha$, if $\evalX{\alpha}{\KX}\neq 0$.

For instance, independence atom for the probability semiring corresponds to the notion of conditional independence in probability theory, and for the Boolean semiring it corresponds to the notion of \emph{embedded multivalued dependency} in database theory.

\begin{example}\label{ex:semantics}
Consider a pure independence atom $\pmi{x}{y}$ for the three $K$-teams presented in Fig. \ref{tab:three}. This independence atom is interpreted
in $K$-teams using the sentence $\phi^{\emptyset, 1,2}_{\rm indep}$. Analogous to Example \ref{ex:depstatements}, we may rewrite this sentence
in a simpler form:
\begin{equation}\label{eq:example}
 \forall uv\Big(\big(\exists y R(u,y) \land \exists x R(x,v)\big)
=\big(\exists xy R(x,y) \land R(u,v)\big)\Big)
\end{equation}
Suppose $x<y$ according to the total order $<$ on variables. Considering the $\SB$-team $\KX_1$, the function $\pi_{\KX_1}$  maps facts $R(a,a)$ and $R(a,b)$ to $1$, and facts $R(b,a)$ and $R(b,b)$ to $0$. 
 Then, $\pi_{\KX_1}$ interprets the formula equality in \eqref{eq:example} as $(1\land 1) = (1\land 1)$ for $(u,v)\mapsto \{(a,a),(a,b)\}$,
 and as $(0\land 1) = (1\land 0)$ for $(u,v)\mapsto \{(b,a),(b,b)\}$. Hence $\evalX{\pmi{x}{y}}{\KX_1}=1$, meaning that $\KX_1\models \pmi{x}{y}$. An alternative way to obtain $\KX_1\models\pmi{x}{y}$ is to use Propositions \ref{prop:can} and \ref{prop:fodefs}, noting that the model (over signature $\{R\}$) defined by $\KX_1$ satisfies a first-order sentence that is $\SB$-equivalent to $\phi^{\emptyset, 1,2}_{\rm indep}$.
Using similar calculations we may further observe $\KX_3$ satisfies $\pmi{x}{y}$ while $\KX_2$ does not. 
 On the other hand, $\KX_2$ is the only $K$-team of the three satisfying the dependence atom $\dep{x,y}$.
\end{example}

Recall that $\cmap{K}:K\to \SB$ is the characteristic mapping that associates non-zero values of $K$ with $1$ and zero with $0$. The following proposition shows that this mapping preserves the truth of all $\fopos$-formulae.
\begin{proposition}
Let $\pi$ be a $K$-interpretation over a positive semiring $K$. 
Then for all $\fopos$-definable $\phi$,  $\evalX{\phi}{\cmap{K}\circ \pi}= 0$ implies 
 $\evalX{\phi}{\pi}= 0$.
\end{proposition}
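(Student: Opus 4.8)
The plan is to prove the contrapositive by structural induction on $\phi$, which I take to be in negation normal form: for every $\fopos$-formula $\phi$ and every assignment $s$, if $\eval{\phi}{\pi}{s}\neq 0$ then $\eval{\phi}{\cmap{K}\circ\pi}{s}\neq 0$; the statement for sentences then follows by instantiating $s$ with the empty assignment. Two facts will be used throughout. First, since $K$ is positive, the proposition characterising positive semirings (via the characteristic mapping) gives that $\cmap{K}\colon K\to\SB$ is a semiring homomorphism; combined with the homomorphism-preservation property of formula evaluation, which extends verbatim to formulas evaluated under a fixed assignment, this yields $\eval{\psi}{\cmap{K}\circ\pi}{s}=\cmap{K}(\eval{\psi}{\pi}{s})$ for every ordinary first-order subformula $\psi$. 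Second, positivity means that $K$ is both $+$-positive and free of divisors of $0$. Because the formula (in)equalities in $\fopos$ occur positively and without nesting, after $\nnf$ every $\fopos$-formula is built from literals and from atoms $\psi_1*\psi_2$ with $*\in\{=,\nzt,\leq\}$ and $\psi_1,\psi_2$ first-order, combined only by $\land,\lor,\exists,\forall$, so the homomorphism identity applies directly to the arguments of each (in)equality atom.

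For the base cases I would argue as follows. A literal $\ell$ satisfies $\eval{\ell}{\cmap{K}\circ\pi}{s}=\cmap{K}(\eval{\ell}{\pi}{s})$, so nonzeroness transfers because $\cmap{K}(a)\neq 0$ exactly when $a\neq 0$; the atom $\psi\neq\bot$ ($\nzt$) is immediate from the same equivalence. If $\psi_1=\psi_2$ holds under $\pi$, then applying the function $\cmap{K}$ to the identity $\eval{\psi_1}{\pi}{s}=\eval{\psi_2}{\pi}{s}$ and using the homomorphism identity gives $\eval{\psi_1}{\cmap{K}\circ\pi}{s}=\eval{\psi_2}{\cmap{K}\circ\pi}{s}$, so the atom holds under $\cmap{K}\circ\pi$. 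For $\psi_1\leq\psi_2$ I would first record that $\cmap{K}$ is monotone with respect to the natural order: if $a\leq_K b$, then $b=a+c$ for some $c$, whence $\cmap{K}(b)=\cmap{K}(a)\lor\cmap{K}(c)\geq_{\SB}\cmap{K}(a)$; applying this to $\eval{\psi_1}{\pi}{s}\leq\eval{\psi_2}{\pi}{s}$ discharges the case.

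The inductive cases are where positivity is essential. For $\phi_1\land\phi_2$ and $\forall x\,\psi$ a nonzero value is a (finite) product over the domain that is nonzero, so by the absence of divisors of $0$ each factor is nonzero; the induction hypothesis makes each factor nonzero under $\cmap{K}\circ\pi$, and a product of nonzero Boolean values is $1$. For $\phi_1\lor\phi_2$ and $\exists x\,\psi$ a nonzero value is a sum that is nonzero, so by $+$-positivity some summand is nonzero, which the induction hypothesis transfers to $\cmap{K}\circ\pi$, and a Boolean sum with a nonzero summand is $1$. The delicate point to get right—and the reason the dual operators $\neq$ and $\not\leq$ are excluded from $\fopos$—is that the collapse $\cmap{K}$ may identify distinct nonzero values, so the equality and inequality atoms are preserved only in the single direction furnished by monotonicity and by functional application of $\cmap{K}$; once this is isolated, positivity is precisely the hypothesis that simultaneously provides the homomorphism and powers the connective and quantifier cases, so the main work is bookkeeping rather than a genuine obstacle.
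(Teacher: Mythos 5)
Your proof is correct, and it reaches the result by a genuinely different decomposition than the paper's. The paper runs a single structural induction with a strengthened hypothesis: it proves simultaneously that the implication becomes an equivalence whenever the subformula is plain $\FO$, and then the xor-analysis of $=$, $\leq$, and $\nzt$ in $\SB$ lets the two-way hypothesis be applied to the arguments of each (in)equality atom. You instead dispose of the entire $\FO$ layer with one lemma: since $K$ is positive, $\cmap{K}$ is a homomorphism (the cited Proposition 6 of Gr\"adel--Tannen), so the fundamental homomorphism-preservation property yields $\eval{\psi}{\cmap{K}\circ\pi}{s}=\cmap{K}(\eval{\psi}{\pi}{s})$ for every $\FO$ subformula $\psi$ (and this extension to open formulae under a fixed assignment is indeed routine), making the two-way transfer at the (in)equality atoms immediate; the remaining outer induction through $\land,\lor,\exists,\forall$ is one-directional. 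Your route is more modular --- positivity is consumed exactly once, inside the homomorphism lemma --- while the paper's is self-contained. Two small remarks. First, your annotations of the inductive cases misplace where positivity acts: in the contrapositive direction, ``product nonzero implies each factor nonzero'' and ``sum nonzero implies some summand nonzero'' hold in \emph{any} semiring (a zero factor annihilates a product, and a sum of zeros is zero); absence of zero divisors and $+$-positivity are needed only for the converse implications, which is exactly the content of the homomorphism lemma you already invoked, so nothing breaks --- the justifications are merely stronger than required. Second, your monotonicity of $\cmap{K}$ is proved for the natural order $\leq_K$, whereas the paper only assumes $(K,\leq)$ is a partially ordered semiring when interpreting formula inequality; but the paper's own treatment of the $\leq$ case (which needs $a\neq 0$ and $b=0$ to force $a\not\leq b$) makes the same tacit assumption that $a\leq 0$ entails $a=0$, so on this point you are no worse off than the source.
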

\begin{proof}
The proof proceeds by structural induction on the structure of $\phi$.
We prove simultaneously that the implication can be strengthened to if and only if when $\phi\in \FO$.
The cases for first-order literals follow directly from the definition of $\cmap{K}$, and the case for $\neg \phi$ is trivial. 

The cases for formula equalities and inequalities follow from the positiveness of $K$ together with the induction hypothesis. Below $\phi, \psi\in \FO$, and xor is the exclusive or.
\begin{align*}
\evalX{\phi = \psi}{\cmap{K}\circ \pi}= 0 &\quad\Leftrightarrow\quad \evalX{\phi}{\cmap{K}\circ \pi} = 0 \text{ xor } \evalX{\psi}{\cmap{K}\circ \pi}= 0\\
&\quad\Leftrightarrow\quad \evalX{\phi}{\pi}= 0 \text{ xor } \evalX{\psi}{\pi}= 0\\
&\quad\Rightarrow\quad \evalX{\phi = \psi}{\pi}= 0\\[5pt]
\evalX{\phi \leq \psi}{\cmap{K}\circ \pi}= 0 &\quad\Leftrightarrow\quad \evalX{\phi}{\cmap{K}\circ \pi} \neq 0 \text{ and } \evalX{\psi}{\cmap{K}\circ \pi} = 0\\
&\quad\Leftrightarrow\quad \evalX{\phi}{\pi} \neq 0 \text{ and } \evalX{\psi}{\pi } = 0\\
&\quad\Rightarrow\quad \evalX{\phi \leq \psi}{\pi}= 0
\end{align*}
The case for $\evalX{\phi \neq \bot}{\cmap{K}\circ \pi}= 0$ is similar.

The cases for $\land$, $\lor$, $\exists$, and $\forall$ follow from the positiveness of $K$ together with the induction hypothesis, we show $\lor$:
\begin{align*}
\evalX{\phi \lor \psi}{\cmap{K}\circ \pi}= 0 &\quad\Leftrightarrow\quad \evalX{\phi}{\cmap{K}\circ \pi}= 0 \text{ and } \evalX{\psi}{\cmap{K}\circ \pi}= 0\\
&\quad\Rightarrow\quad \evalX{\phi}{\pi}= 0 \text{ and } \evalX{\psi}{\pi}= 0\\
&\quad\Leftrightarrow\quad \evalX{\phi \lor \psi}{\pi}= 0
\end{align*}
The implication above follows from the induction hypothesis and can be strengthened to an equivalence if $\phi, \psi\in \FO$.
\end{proof}
Let us now define the \emph{possibilistic collapse}  of a $K$-team $\KX$ is as the $\SB$-team $\cmap{K} \circ \KX$. We sometimes identify the possibilistic collapse with the team it defines (i.e., the set of assignments it maps to $1$).
As a consequence of the previous proposition, any $\fopos$-definable atom is preserved under the possibilistic collapse.
\begin{corollary}\label{cor:collapse}
Let $\KX$ be a $K$-team over a positive semiring $K$, and let $\alpha$ be an $\fopos$-definable atom. If $\KX$ satisfies $\alpha$, then its possibilistic collapse satisfies $\alpha$. The converse direction holds true if 
$\alpha$ is $\foneg$-definable.
\end{corollary}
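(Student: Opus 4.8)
The plan is to reduce both implications to the preceding Proposition and to a dual version of it. Fix a sentence $\phi_\alpha$ defining $\alpha$ (taken in $\fopos$ for the forward claim and in $\foneg$ for the converse), so that by definition $\evalX{\alpha}{\KX}=\evalX{\phi_\alpha}{\pi_{\KX}}$, and recall that a $K$-team satisfies $\alpha$ precisely when this value is nonzero. The first point I would record is that the interpretation attached to the possibilistic collapse can be chosen to be $\cmap{K}\circ\pi_{\KX}$: this map sends each fact $R(\vec{a}_s)$ to $\cmap{K}(\KX(s))=(\cmap{K}\circ\KX)(s)$, which is exactly what is required of an interpretation of $\cmap{K}\circ\KX$. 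Consequently $\evalX{\alpha}{\cmap{K}\circ\KX}=\evalX{\phi_\alpha}{\cmap{K}\circ\pi_{\KX}}$, and both directions reduce to comparing whether $\evalX{\phi_\alpha}{\pi_{\KX}}$ and $\evalX{\phi_\alpha}{\cmap{K}\circ\pi_{\KX}}$ are zero; in particular no model-definingness of $\pi_{\KX}$ is needed.

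For the forward direction $\phi_\alpha\in\fopos$, and I would merely apply the preceding Proposition in contrapositive form: since $\evalX{\phi_\alpha}{\cmap{K}\circ\pi_{\KX}}=0$ implies $\evalX{\phi_\alpha}{\pi_{\KX}}=0$, we obtain that $\evalX{\phi_\alpha}{\pi_{\KX}}\neq 0$ implies $\evalX{\phi_\alpha}{\cmap{K}\circ\pi_{\KX}}\neq 0$, i.e.\ $\KX\models\alpha$ implies $\cmap{K}\circ\KX\models\alpha$.

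For the converse $\phi_\alpha\in\foneg$, and what is required is the dual statement: for every $\foneg$-definable $\phi$, $\evalX{\phi}{\pi}=0$ implies $\evalX{\phi}{\cmap{K}\circ\pi}=0$. I would prove this by the same structural induction as the Proposition, run in the opposite direction. The first-order leaves and the connectives $\land,\lor,\exists,\forall$ are symmetric: using that $\cmap{K}$ is a homomorphism for positive $K$ (so that $\evalX{\psi}{\cmap{K}\circ\pi}=\cmap{K}(\evalX{\psi}{\pi})$ on $\FO$), together with $+$-positivity for $\lor,\exists$ and the absence of zero divisors for $\land,\forall$, each of these cases is an equivalence exactly as in the Proposition. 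The new cases are the $\foneg$ formula (in)equalities $\zt$, $\neq$, $\not\leq$, whose arguments are first-order; these are the Boolean complements of $\nzt$, $=$, $\leq$, and each yields the reversed implication by the same positivity argument. For instance, $\evalX{\phi\neq\psi}{\cmap{K}\circ\pi}\neq 0$ means $\cmap{K}(\evalX{\phi}{\pi})\neq\cmap{K}(\evalX{\psi}{\pi})$, which, as $\cmap{K}$ takes values in $\{0,1\}$, forces exactly one of $\evalX{\phi}{\pi},\evalX{\psi}{\pi}$ to vanish, hence $\evalX{\phi}{\pi}\neq\evalX{\psi}{\pi}$ and $\evalX{\phi\neq\psi}{\pi}\neq 0$; the cases of $\zt$ and $\not\leq$ are analogous. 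Applying this dual in contrapositive form to $\phi_\alpha$ gives $\cmap{K}\circ\KX\models\alpha$ implies $\KX\models\alpha$.

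The main obstacle I anticipate is precisely this dual statement. The converse is not the literal contrapositive of the Proposition — that contrapositive only re-proves the forward implication — so one cannot simply cite it and must instead verify that the induction re-runs with all the inequalities reversed. The symmetric first-order core and the homomorphism property of $\cmap{K}$ make the connective cases immediate, so the only real work is checking that the three $\foneg$ leaves behave dually to their $\fopos$ counterparts, which again hinges on $+$-positivity and the absence of zero divisors.
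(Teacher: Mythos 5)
Your proof is correct, and the forward direction coincides with the paper's: there the corollary is stated as an immediate consequence of the preceding preservation proposition, applied in contrapositive exactly as you do. Where you genuinely add content is the converse: the paper gives no explicit argument for the $\foneg$ case, and you rightly observe that it is not the contrapositive of the proposition but a dual statement needing its own induction. Your dual induction is sound --- the $\FO$ core is an equivalence via the homomorphism property of $\cmap{K}$ (this is where positivity of $K$ enters, through the quoted Proposition 6 of Gr\"adel--Tannen), the cases $\lor,\exists$ use $+$-positivity, the cases $\land,\forall$ use the absence of divisors of $0$, and the three $\foneg$ leaves reverse cleanly; for $\not\leq$ you implicitly need that $a\leq 0$ forces $a=0$, which holds for the natural order and is the same tacit assumption the paper's own $\leq$ case already makes. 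It is worth noting that the paper's machinery also offers a shortcut you did not take: since $\nnf(\neg\phi)$ lies in $\fopos$ whenever $\phi\in\foneg$, the converse follows from the forward direction together with Proposition~\ref{prop:posmodeldef} applied over $K$ and over $\SB$ --- at the cost of assuming $\pi_{\KX}$ model-defining, an assumption your route avoids, as you correctly point out.
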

For instance, in Example \ref{ex:semantics} we observed that the $\SB$-team $\KX_1$ satisfies the independence atom $\pmi{x}{y}$. Since $\KX_1$ is the possibilistic collapse of $\KX_3$, this follows already by the fact that $\KX_3$ satisfies the same atom, which in turn is $\fopos$-definable.
%
We also noted that the $\mathbb{N}$-team $\KX_2$ in Fig. \ref{tab:three} satisfies the dependence atom $\dep{x,y}$. By Corollary \ref{cor:collapse} this follows also from the fact that the possibilistic collapse of $\KX_2$ satisfies the same $\foneg$-definable atom.

At this point we note that an alternative way to interpret the dependence atom $\dep{x,y}$ in a data table with duplicates (i.e., in an $\mathbb{N}$-team), would be to stipulate that $x$ uniquely determines $y$ and has no duplicates in the projection of the table to $x$ and $y$ (this would be a natural extension of the semiring interpretation of keys by \cite[Definition 4.1]{ChuMRCS18}). However, such a notion of dependence would
fail to satisfy the reflexivity rule of functional dependencies, which entails that $\dep{x,x}$ always holds.

 \section{Team semantics}\label{sect:team-semantics} 
Next we explore how all the team semantics variants (and more) can be unified under the rubric of semirings.
We first present an adaptation of team semantics for $K$-teams, and then consider $K$-interpretations of complex formulae.

For the notion of team semantics, a few useful concepts are needed.
The \emph{projection} $\proj{X}{V}$ of $X$ on a variable set $V\subseteq \Dom(X)$ is defined as the set $\{\proj{s}{V}\mid s \in X\}$, where $\proj{s}{V}$ is the projection of an assignment $s$ on $V$ defined in the usual way. For a set $S$ and a variable, we define $X[S/x]$ as the team $\{s[a/x]\mid s\in X, a\in S\}$. For a set $S$ and a function $F: X \to \Pow(S)\setminus \{\emptyset\}$, we define $X[F/x]$ as the team $\{s[a/x]\mid s\in X, a\in F(s)\}$. 

We present the standard team semantics for $\FO$ in negation normal form. 

\begin{definition}[Team semantics]\label{def:teamsemantics}
Let $X$ be a team of a first-order structure $\mA$ over vocabulary $\tau$. For $\phi \in \FO[\tau]$, we define when $X$ \emph{satisfies} $\phi$ \emph{under} $\mA$, written $\mA\models_X \phi$, as follows ($\models_s$ refers to the usual Tarski semantics of FO):
	\begin{tabbing}
		\= $\mA \models_{X} l$ \hspace{7ex}\= $\Leftrightarrow$\quad \= $\mA\models_s l$ for all $s\in X$ ($l$ is a literal),\\ 
		\> $\mA\models_{X} (\psi \land \theta)$ \> $\Leftrightarrow$ \> $\mA \models_{X} \psi \text{ and } \mA \models_{X} \theta$,\\
		\> $\mA\models_{X} (\psi \lor \theta)$ \> $\Leftrightarrow$ \> $\mA\models_{Y} \psi \text{ and } \mA \models_Z \theta$\text{ for some }\\
		\>\>\>\text{$Y,Z\sub X$ such that }$Y\cup Z = X$,\\
		\> $\mA\models_{X} \forall x\psi$ \> $\Leftrightarrow$ \> $\mA\models_{X[A/x]} \psi$,\\
		\> $\mA\models_{X} \exists x\psi$ \> $\Leftrightarrow$ \> $\mA\models_{X[F/x]} \psi$\text{ for some function }\\
		\>\>\>$F: X \to \Pow(A)\setminus \{\emptyset\}$.
	\end{tabbing}
\end{definition}

\subsection{$K$-team semantics}

 By re-examining team semantics through the lens of semirings, 
we can arrive at the following truth definition.

\begin{definition}[$K$-team semantics]\label{def:kteamsemantics}
Let $\KX$ be a $K$-team of a first-order structure $\mA$ over vocabulary $\tau$. For $\phi \in \FO[\tau]$, we define when $\KX$ \emph{satisfies} $\phi$ \emph{under} $\mA$, written $\mA\models_{\KX} \phi$:
	\begin{tabbing}
		\= $\mA \models_{\KX} l$ \hspace{7ex}\= $\Leftrightarrow$\quad \= $\mA\models_s l$ for all $s\in \support(\KX)$ ($l$ is a literal),\\ 
		\> $\mA\models_{\KX} (\psi \land \theta)$ \> $\Leftrightarrow$ \> $\mA \models_{\KX} \psi \text{ and } \mA \models_{\KX} \theta$,\\
		\> $\mA\models_{\KX} (\psi \lor \theta)$ \> $\Leftrightarrow$ \> $\mA\models_{\KY} \psi \text{ and } \mA \models_{\KZ} \theta \text{ for some }\KY,\KZ$\\
		\>\>\>$\text{such that } \forall s: \KY(s) + \KZ(s) = \KX(s)$,\\
		\> $\mA\models_{\KX} \forall x\psi$ \> $\Leftrightarrow$ \> $\mA\models_{\KY} \psi$, where $\KY$ is such that \\
		\>\>\>$\forall s,a:\KX(s) = \KY(s[a/x])$\\
		\> $\mA\models_{\KX} \exists x\psi$ \> $\Leftrightarrow$ \> $\mA\models_{\KY} \psi$\text{ for some $\KY$ such that }\\
		\>\>\>$\forall s:\KX(s) = \sum_{a} \KY(s[a/x])$.
	\end{tabbing}
\end{definition}

For the Boolean semiring, the above definition gives the standard team semantics presented in Definition \ref{def:teamsemantics}. For the semiring of natural numbers, we obtain multiteam semantics \cite{GradelW22}, and for the probability semiring we obtain probabilistic team semantics \cite{abs-2003-00644}.

The extension of first-order logic with dependence atoms is called 
\emph{dependence logic}. Similarly \emph{independence logic} and \emph{inclusion logic}
 are the extensions of $\FO$ with conditional independence atoms
 and inclusion atoms, respectively. The interpretations of relational and dependency atoms are as defined in Section \ref{sec:Katoms},
 except that the definitions are of the form $\evalX{-}{\mA,\KX} \dfn \evalX{-}{\pi_{\mA,\KX}}$ (instead of $\evalX{-}{\KX} \dfn \evalX{-}{\pi_{\KX}}$), where $\pi_{\mA,\KX}$ is a model defining interpretation for $\mA$ that encodes both $\mA$ and $\KX$.
 We then stipulate $\mA\models_{\KX} \alpha$, if $\evalX{\alpha}{\mA,\KX}\neq 0$, when $\alpha$ is an atom or a literal.
 Note that the case for literals given in the above definition coincides with the definition of Section \ref{sec:Katoms}.

It has been observed that seminal ``No-Go'' theorems in quantum mechanics, such at the Bell's theorem or the Kochen-Specker theorem, can be formalised as a logical entailment $\Sigma \models \phi$, where $\Sigma \cup\{\phi\}$ is a collection of  dependence or independence logic formulae \cite{AlbertG22,abramsky2021team}. In this particular context, it does not make any difference whether one considers relational or probabilistic team semantics. 
Also in general the two are connected: for independence logic, satisfaction in probabilistic team semantics implies satisfaction in relational team semantics, and the converse holds for dependence logic \cite[Theorem 3.5]{AlbertG22} and \cite{DurandHKMV18}. Using $K$-teams, these results can now be stated in the following more general form.

Given a collection of atoms $C$, we write $\FO(C)$ for the extension of (negation normal form) $\FO$ with atoms in $C$.
We say that a semiring $K$ is \emph{$+$-dense} if for all nonzero $a\in K$ there exist nonzero $b,c\in K$ such that $a=b+c$. 
\begin{theorem}
Let $C$ and $D$ be collections of $\fopos$-definable and $\foneg$-definable atoms, resp. Assume $\phi\in\FO(C)$ and $\psi\in\FO(D)$. 
Let $\mA$ be a first-order structure, $\KX$ a $K$-team over a positive semiring $K$, and
$X$ the possibilistic collapse of $\KX$. Then, $\mA\models_{\KX} \phi \Rightarrow \mA \models_X \phi$. Moreover, if $K$ is $+$-dense, then $\mA\models_{\KX} \psi \Leftarrow \mA \models_X \psi$.
\end{theorem}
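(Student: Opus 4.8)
The plan is to prove both implications by structural induction on the formula, running two parallel inductions for the two directions (the classes $C$ and $D$ differ, so the base cases are not symmetric even though the connective cases are). Throughout I identify the possibilistic collapse $X=\cmap{K}\circ\KX$ with the set $\support(\KX)$, so that $X$ is literally the $\SB$-team whose support coincides with that of $\KX$. The base cases are exactly Corollary~\ref{cor:collapse}: for the forward direction the atoms in $C$ are $\fopos$-definable, so $\mA\models_{\KX}\alpha$ implies $\mA\models_X\alpha$; for the backward direction the atoms in $D$ are $\foneg$-definable, giving the converse. Literals are subsumed, since the literal clauses of Definitions~\ref{def:teamsemantics} and~\ref{def:kteamsemantics} both quantify over $\support(\KX)=X$ and therefore agree verbatim. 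The connective $\land$ is immediate from the induction hypothesis, as both semantics reduce a conjunction to the same two subgoals on the same (sub)team.

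The heart of the argument is to show that each $K$-team produced by the quantifier and disjunction clauses collapses to the matching $\SB$-team, and conversely that each such $\SB$-team lifts back to a $K$-team with the prescribed support and sums. For the forward direction, suppose $\mA\models_{\KX}\psi\lor\theta$ via $\KY,\KZ$ with $\KY(s)+\KZ(s)=\KX(s)$ for all $s$. Since $K$ is $+$-positive, $\KX(s)=0$ iff $\KY(s)=\KZ(s)=0$, whence $\support(\KX)=\support(\KY)\cup\support(\KZ)$; writing $Y,Z$ for the collapses of $\KY,\KZ$, this yields $Y\cup Z=X$ and $Y,Z\sub X$, and the induction hypothesis delivers $\mA\models_Y\psi$ and $\mA\models_Z\theta$. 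The universal case is more direct: the unique $\KY$ with $\KY(s[a/x])=\KX(s)$ has support $\{s[a/x]\mid s\in\support(\KX),\,a\in A\}$, which is exactly the standard universal extension $X[A/x]$. For the existential case, given $\KY$ with $\sum_a\KY(s[a/x])=\KX(s)$, I set $F(s)=\{a\mid\KY(s[a/x])\neq 0\}$; $+$-positivity forces a nonzero sum to have a nonzero summand, so $F(s)\neq\emptyset$ for $s\in X$, and the same observation shows $\support(\KY)=X[F/x]$, reducing once more to the induction hypothesis. Note that only positivity is used here, consistent with the first claim needing no density assumption.

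The main obstacle is the backward direction for $\lor$ and $\exists$, and this is precisely where $+$-density enters. Given a witnessing decomposition $X=Y\cup Z$ (with $Y,Z\sub X$) for $\mA\models_X\psi\lor\theta$, I must construct $\KY,\KZ$ with $\KY+\KZ=\KX$ pointwise whose collapses are $Y$ and $Z$. On $Y\setminus Z$ I put all the weight on $\KY$ and on $Z\setminus Y$ all on $\KZ$; but on $Y\cap Z$ I need $\KY(s)$ and $\KZ(s)$ both nonzero with $\KY(s)+\KZ(s)=\KX(s)$, which is possible exactly because $K$ is $+$-dense and $\KX(s)\neq 0$. Analogously, for $\exists x\psi$ with witness $F$ and $\mA\models_{X[F/x]}\psi$, I distribute each nonzero $\KX(s)$ over the extensions $\{s[a/x]\mid a\in F(s)\}$ into $|F(s)|$ nonzero parts, iterating $+$-density, while sending the remaining extensions to $0$; the resulting $\KY$ satisfies $\sum_a\KY(s[a/x])=\KX(s)$ and collapses to $X[F/x]$, so the induction hypothesis gives $\mA\models_{\KY}\psi$ and hence $\mA\models_{\KX}\exists x\psi$. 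The only delicate point is that these splittings require $F(s)$ (hence $A$) to be finite, so that finitely many applications of $+$-density suffice; under the standing finiteness assumptions this is unproblematic.
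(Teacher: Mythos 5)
Your proof is correct and takes essentially the same route as the paper's: structural induction with Corollary~\ref{cor:collapse} as the atomic case, identifying the possibilistic collapse with the support, and using $+$-positivity for the forward direction and $+$-density to construct the witnessing splittings in the backward direction. The paper only writes out the $\lor$ case and dismisses $\exists$ and $\forall$ as similar, so your explicit constructions (the definition of $F$ from the support of $\KY$, the iterated density splitting into $|F(s)|$ nonzero parts, and the finiteness caveat) simply supply details the paper omits.
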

\begin{proof}
The proof proceeds by structural induction on the formulae; Corollary \ref{cor:collapse} is the atomic case. The case for conjunction is trivial, and the cases for $\lor$, $\exists$, and $\forall$ are similar to each other. We show the case of $\lor$. Consider the following:
\begin{align*}
&\mA\models_{\KY} \theta_1 \text{ and } \mA \models_{\KZ} \theta_2 \text{ for } \KY,\KZ
\text{ s.t. } \forall s: \KY(s) + \KZ(s) = \KX(s)\\
&\mA\models_{Y} \theta_1 \text{ and } \mA \models_{Z} \theta_2 \text{ for some } Y, Z
\text{ such that } Y \cup Z = X
\end{align*}
The first line is by definition equivalent to $\mA\models_{\KX} \theta_1 \lor \theta_2$, and the second line to $\mA\models_{X} \theta_1 \lor \theta_2$. Assuming the first line, we obtain the second line with $Y$ and $Z$ being supports of $\KY$ and $\KZ$ by $+$-positiveness. Assuming the second line, we obtain the first line with $Y$ and $Z$ being supports of $\KY$ and $\KZ$ by $+$-denseness.
\end{proof}

\subsection{Algebraic $K$-team semantics}\label{sec:algebraic_semantics}
Next we reformulate $K$-team semantics via constrained polynomials  $\eval{\phi}{\mA}{\KX}$ over $K$ which can be used for provenance analysis and various counting tasks.
Due to the use of identities, constrained polynomials are terms over the expansion of $(K,+,\cdot,0,1)$ by suitable $\chi$-functions giving access to identity between terms.
Our approach can be used to reduce satisfaction in $K$-team semantics to the existential first-order theory of $K$. 
Similar reductions have been utilised in the case of Boolean and probabilistic team semantics to analyse the complexity of model checking and satisfiability \cite{jelia19,abs-2003-00644,HannulaV22,DKV22} 

Let $\mA$ be a finite model with universe $A$. Let $V$ be a finite set of variables and $\vec{a}\in A^{V}$. Below $\KX(\vec{a})$ denotes a variable over $K$. Observe that by fixing interpretations for 
$\KX(\vec{a})$, for all  $\vec{a}\in A^{V}$, a unique $K$-team $\KX$ is  determined.  The constrained polynomial $\eval{\phi}{\mA}{\KX}$ contains also other variables $\KY(\vec{b})$ and $\KZ(\vec{c})$ that represent new teams that arise along the evaluation of disjunctions and the quantifiers, where $\vec{b}\in A^{V_1}$ and $\vec{c}\in A^{V_2}$, for $V_1,V_2 \supseteq V$. So $\eval{\phi}{\mA}{\KX}$ defines a function from $K^n$ to $K$ (where $n$ is the number of variables of $\eval{\phi}{\mA}{\KX}$), which yields a value in $K$ once values for all the free variables have been fixed.

\begin{definition}Let $\mA$ be a finite model and $V$ a finite set of variables.
We define $K$\emph{-interpretation} $\eval{\cdot}{\mA}{\KX}$ as follows. 
Below $\chi$ is the characteristic function of equality (with respect to $0$ and $1$ from $K$), and $a$ and $\vec{a}$ range over $A$ and tuples from $A$, resp. Note that each tuple from $\vec{a}\in A^V$ gives rise to an assignment $s\colon V \to A$ such that $\vec{a} = \big( s(x_1),\dots, s(x_n) \big)= \vec{a}_s$.
\begin{align*}
    \eval{\phi \lor \psi}{\mA}{\KX} &= \eval{\phi }{\mA}{\KY} \cdot     \eval{\psi }{\mA}{\KZ}  \cdot \prod_{\vec{a}} \chi[\KY(\vec{a}) + \KZ(\vec{a}) = \KX(\vec{a})] \\   \quad     \eval{\phi \land \psi}{\mA}{\KX}& =     \eval{\phi }{\mA}{\KX} \cdot  \eval{\psi }{\mA}{\KX} \\
  \eval{\forall x\phi}{\mA}{\KX} & =    \eval{\phi }{\mA}{\KY} \cdot \prod_{a,s\colon \Dom(\KX)\to A} \chi[\KX(\vec{a}) = \KY(\vec{a}_{s[a/x]})]  &\quad \\
     \eval{\exists x\phi}{\mA}{\KX} & = \eval{\phi }{\mA}{\KY} \cdot \prod_{s\colon \Dom(\KX)\to A} \chi[\KX(\vec{a}) = \sum_{a} \KY(\vec{a}_{s[a/x]})]
\end{align*}
For first-order literals and atoms, we utilise the interpretations defined in Section \ref{sec:Katoms} with the modification discussed in the previous subsection. Recall that the $K$-interpretation of dependencies and relational atoms is defined in terms of $\pi_{\mA, \KX}:\lit{A}\to K$ mapping $R(\vec{a})$ to $\KX(\vec{a})$, where $R$ is a relation symbol (not in the vocabulary of $\mA$) representing the team  $\KX$. We further assume that, except for $R$, $\pi_{\mA, \KX}$ is identical to the canonical truth interpretation $\pi_{\mA}$ of $\mA$.


Let $T(\vec{x})$ be a first-order literal and $\mA$ a structure. Then, $\evalX{T(\vec{x})}{\mA,\KX}$  can be expanded into:
\begin{align}
    \prod_{s\colon \Dom(\KX)\to A} \Big( \chi[\KX(\vec{a}_s)=0] + \chi[\KX(\vec{a}_s)\neq  0]\cdot  T\big(s(\vec{x})\big) \Big) \label{counting_poly}
\end{align}
where $\KX(\vec{a})$ and $T(\vec{a})$ are interpreted according to $\pi_{\mA,\KX}$.
Now, the definitions of literals and dependencies given in Section \ref{sec:Katoms} can be imported into the algebraic semantics by viewing strings of the form $\KX(\vec{a})$ as variables ranging over $K$. 
\end{definition}

It is straightforward to show that $K$-team semantics of Definition \ref{def:kteamsemantics} coincides in the following sense with algebraic $K$-team semantics.
\begin{proposition}
$\mA\models_{\KX}\phi$ iff $\Ran(\eval{\phi}{\mA}{\KX})\neq \{0\}$,
where $\Ran(\eval{\phi}{\mA}{\KX})$ is the range of the function defined by $\eval{\phi}{\mA}{\KX}$
 when the interpretations of $\KX(\vec{a})$ are fixed according to $\KX$.
\end{proposition}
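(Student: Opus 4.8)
The plan is to establish the biconditional by a single structural induction on $\phi$, reading each $\chi$-factor occurring in $\eval{\phi}{\mA}{\KX}$ as a Boolean filter that encodes exactly one side-condition of Definition~\ref{def:kteamsemantics}, and the fresh variables $\KY(\vec{b}),\KZ(\vec{c})$ as ranging over precisely the candidate witness teams mentioned in that definition. Throughout, $\Ran(\eval{\phi}{\mA}{\KX})\neq\{0\}$ is to be read as: \emph{some} assignment of values to these auxiliary variables makes the term evaluate to a nonzero element of $K$, the values $\KX(\vec{a})$ being held fixed according to $\KX$.

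For the base cases, suppose first that $\phi$ is a first-order literal $T(\vec{x})$, whose term is the product \eqref{counting_poly} indexed by assignments $s$. Each factor equals $1$ when $\KX(\vec{a}_s)=0$ and equals $T(s(\vec{x}))\in\{0,1\}$ when $\KX(\vec{a}_s)\neq 0$; hence a factor vanishes exactly when $s\in\support(\KX)$ and $\mA\not\models_s T(\vec{x})$. The product is therefore nonzero iff $\mA\models_s T(\vec{x})$ for all $s\in\support(\KX)$, matching the literal clause of Definition~\ref{def:kteamsemantics}. If instead $\phi$ is a dependence, independence or inclusion atom, its term introduces no auxiliary variables, so $\Ran$ is the singleton $\{\evalX{\phi}{\mA,\KX}\}$ and $\Ran\neq\{0\}$ is by definition $\mA\models_{\KX}\phi$.

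The inductive step splits into two implications, of which the backward one is uniform and holds over every commutative semiring. Since $x\cdot 0=0=0\cdot x$ always, a nonzero value of any product term forces each of its factors to be nonzero. Thus, whenever a choice of auxiliary variables makes $\eval{\phi}{\mA}{\KX}$ nonzero, every $\chi$-filter must equal $1$ (so the associated split $\KY+\KZ=\KX$, duplication, or summation constraint holds) and every recursive factor $\eval{\psi}{\mA}{\KY}$ is nonzero at the chosen assignment, so $\Ran(\eval{\psi}{\mA}{\KY})\neq\{0\}$; the induction hypothesis gives $\mA\models_{\KY}\psi$, and combining these across the factors yields $\mA\models_{\KX}\phi$ through the matching clause of Definition~\ref{def:kteamsemantics}.

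The forward implication is where I expect the only real difficulty. Given $\mA\models_{\KX}\phi$, Definition~\ref{def:kteamsemantics} hands us the required witness teams (a split for $\lor$, the duplicated team for $\forall$, a summation witness for $\exists$); I would instantiate $\KY,\KZ$ to these teams, which makes every $\chi$-filter equal $1$, and then invoke the induction hypothesis to choose the remaining auxiliary variables so that each recursive factor takes a nonzero value. For $\forall$ and $\exists$ there is a single recursive factor multiplied by filters equal to $1$, so the product is immediately nonzero; for $\land$ and $\lor$, however, two nonzero recursive values $p$ and $q$ are multiplied, and the argument goes through only if $p\cdot q\neq 0$. This is the crux: for plain first-order $\phi$ every recursive value lies in $\{0,1\}$ and the point is trivial, but once dependence, independence, or inclusion atoms contribute arbitrary semiring values, guaranteeing $p\cdot q\neq 0$ requires that $K$ have no divisors of $0$. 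Hence the forward direction is unproblematic for the positive semirings that are the focus of the paper (in particular $\SB$, $\mathbb{N}$ and $\mathbb{R}_{\geq 0}$), and this multiplicative step is the part of the argument deserving the most care.
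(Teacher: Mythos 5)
Your induction is essentially the intended argument: the paper offers no proof of this proposition at all (it is dismissed as ``straightforward to show''), and your clause-by-clause verification---reading each $\chi$-factor as enforcing exactly one side-condition of Definition~\ref{def:kteamsemantics} and the fresh variables $\KY(\vec{b}),\KZ(\vec{c})$ as the witness teams of that definition---is precisely the check that claim glosses over. Your base cases are right: for a literal, a factor of \eqref{counting_poly} is $1$ off the support of $\KX$ and equals $T(s(\vec{x}))\in\{0,1\}$ on it (since $\pi_{\mA,\KX}$ agrees with the canonical truth interpretation of $\mA$ outside $R$), and for dependency atoms both sides of the biconditional reduce by definition to $\evalX{\alpha}{\mA,\KX}\neq 0$. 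Your backward direction is likewise correct over an arbitrary commutative semiring, since $x\cdot 0=0=0\cdot x$ makes a nonzero product force all factors, hence all $\chi$-filters and all recursive values, to be nonzero. The one bookkeeping point worth making explicit is that in the $\land$ case the two subterms must carry disjoint auxiliary variables (as the paper's setup intends), so the two witnessing assignments can be merged into one.

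More importantly, your closing caveat is not mere caution but a genuine correction to the statement as printed, which places no hypothesis on $K$. Over a semiring with divisors of zero the forward implication fails: take $K=\mathbb{Z}_4$, a one-element universe $A=\{a\}$, the empty team $\KX\equiv 0$ over domain $\{x\}$, and $\alpha=\dep{x,x}$. The single instantiation of the body of $\phi^{1,1}_{\rm dep}$ evaluates to $1+1=2$ (both disjuncts hold: the $\theta$-conjunction has value $0$, and $v=w$ holds), so $\evalX{\alpha}{\mA,\KX}=2\neq 0$ and thus $\mA\models_{\KX}\alpha\land\alpha$ by the conjunction clause of Definition~\ref{def:kteamsemantics}; yet $\alpha\land\alpha$ introduces no auxiliary variables and $\Ran(\eval{\alpha\land\alpha}{\mA}{\KX})=\{2\cdot 2\}=\{0\}$. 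So the proposition needs $K$ to have no divisors of zero (positivity suffices, as you say), or a restriction to pure $\FO$, where every subterm evaluates in $\{0,1\}$ and the multiplicative step is trivial---exactly the dichotomy you identified.
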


Note that any constrained polynomial $\eval{\phi}{\mA}{\KX}$ can be defined by an $\FO$-formula over $K$, and thus checking  $\mA\models_{\KX}\phi$ can be reduced (in polynomial time) to the existential first-order theory of $K$ with additional constants for $\KX(\vec{a})$. 
E.g., satisfaction of a literal $T(\vec{x})$ can be expressed by the 
following formula $\psi$ over an expansion of $(K,+,\cdot,0,1)$ with additional constants from $K$:
\[
\psi \dfn \bigwedge_{\mA\not\models_s T(\vec{x})}\KX(\vec{a}_s)=0.
\]
Now $\evalX{T(\vec{x})}{\mA,\KX}  \neq 0$ iff $\big(K,+,\cdot,0,1, \KX(\vec{a}_1),\dots, \KX(\vec{a}_n)\big)\models \psi $.
It is worth noting that formalising $K$-team semantics of sentences can be done in existential first-order theory of $K$ without additional constants from $K$.


\section{Conclusions and future work}\label{sect:applications}

We defined an extension of $\FO$ under semiring semantics with the ability of comparing semiring values of first-order formulae.
We used this formalism to define concepts such as dependence and independence in a way that encompasses prior interpretations and indicated its advantages in studying the preservation of satisfaction and entailment for dependence statements between different semirings. Such preservation results have previously been studied between database and probability theory. 
We proposed a unifying approach inspired by semiring provenance for analysing the concepts of dependence and independence via a novel semiring team semantics, which subsumes all the previously considered variants for first-order team semantics.
We discovered general explanations for the preservation of satisfaction results from team-semantics literature. 
We conclude by exploring some applications and directions for future work.

\subsection{Axiomatisations and logical implication}
 The notions of dependence and independence  are known to exhibit remarkable similarity in their behavior across various contexts in which they are defined. One example of this are the \emph{Armstrong axioms} \cite{armstrong74}, which describe the laws of inference for functional dependence in relational databases. In this context, if every two tuples in a database that agree on an attribute set $X$ also agree on an attribute set $Y$, we say that $Y$ \emph{functionally depends} on $X$. The Armstrong axioms seem to capture something more fundamental and universal than just this concept.
 For instance, if we consider Shannon's information measures, we can say that a random variable $Y$ depends functionally on another random variable $X$ whenever the conditional entropy $H(Y\mid X)$ of $Y$ given $X$ equals $0$. Similarly, in linear algebra, we may say that 
 a subspace $Y$ of a vector space $V$ depends functionally on another subspace $X$ of $V$ if every vector of $Y$ is a linear combination of vectors in $X$. In all these cases, and in many others, the Armstrong axioms are sound and complete  (see, e.g., \cite{GallianiV22}).

 When it comes to the notion of independence, there are similarities but also differences. As for the similarities, the axioms of \emph{marginal independence} (here, pure independence)  $\pci{\emptyset}{X}{Y}$ formulated by \cite{geiger:1991} in the context of probability theory, are known to be sound and complete in the database context \cite{KontinenLV13}. 
 This correspondence between logical implication in probability theory and database theory extends to the so-called \emph{saturated conditional independence} (in databases, \emph{multivalued dependency}) $\pci{X}{Y}{Z}$, where $X\cup Y\cup Z$ has to cover all variables of the joint distribution (in databases, all attributes of the relation schema) \cite{WongBW00},  as well as their extension with functional dependencies \cite{KenigS22}. 
 Logical implication for the general conditional independence (in databases, {embedded multivalued dependencies}) however is not the same for probability distributions and database relations \cite{studeny:1993}.

 It is noteworthy that this connection between database theory and probability theory seems to hold as long as there exists a common foundation through information theory. Indeed,
  marginal independence, saturated conditional independence, and functional dependence can in both contexts be interpreted through information-theoretic measures \cite{Lee87,GallianiV22}. On the other hand, there does not seem to exist any evident  information-theoretic interpretation for the embedded multivalued dependency of database theory. The semiring approach proposed in this paper manages to unify dependency concepts from various contexts; in particular, conditional independence from probability theory and database theory.
  In doing so, it offers the potential to shed new light on the underlying reasons behind said similarities and differences. 
\begin{wraptable}{R}{0.17\textwidth}
\caption{Mixing fails}
\label{tab:z4}
\centering
\begin{tabular}{@{}ccc:c@{}}
\multicolumn{4}{c}{$ K = \mathbb{Z}_4$} \\
\toprule
$x$ & $y$ &$z$& $\KX(s)$ \\
\midrule
 $a_0$ &$b_0$ & $c_0$ & 1 \\
 $a_0$ &$b_1$ & $c_0$ & 1 \\
 $a_1$ &$b_0$ & $c_1$ & 1 \\
 $a_1$ &$b_1$ & $c_1$ & 1 \\
 $a_2$ &$b_2$ & $c_0$ & 1 \\
 $a_2$ &$b_3$ & $c_0$ & 1 \\
 $a_3$ &$b_2$ & $c_1$ & 1 \\
 $a_3$ &$b_3$ & $c_1$ & 1 \\
\bottomrule
\end{tabular}
\end{wraptable} 

 To illustrate what this sort of semiring approach might reveal, we provide an example that shows how the axiomatic properties of independence may sometimes hinge on the underlying algebraic properties.
\begin{example}
    An element $a$ of a (commutative) semiring $K=(K,+,\cdot,0,1)$ is said to be 
    \emph{cancellative} if for all $b,c\in K$, $ab=ac$ implies $b=c$. 
It can be shown that the axioms of pure independence are sound for $K$-teams if every element $a \in K\setminus \{0\}$ is cancellative. If this condition fails, the \emph{mixing rule} \cite{geiger:1991} of pure independence is not necessarily sound. This rule states that $\pmi{x}{yz}$ can be derived from $\pmi{x}{y}$ and $\pmi{xy}{z}$. For a counterexample, 
%
 the ring $\mathbb{Z}_4$ of integers modulo $4$ contains a non-cancellative element $2\neq 0$. 
If we define a $\mathbb{Z}_4$-team $\KX$ as in Table \ref{tab:z4}, we observe that $\KX$ satisfies $\pmi{x}{y}$ and $\pmi{xy}{z}$, but  fails to satisfy $\pmi{x}{yz}$.
\end{example}

\subsection{Provenance and counting proofs}

The introduction of the algebraic semantics is partially motivated by the fact that $\eval{\phi}{\mA}{\KX}$ can be used for provenance analysis and counting tasks. 
In provenance information is extracted from \emph{tokens} (or \emph{annotations}). In the $K$-team setting,  each assignment is annotated with a token. Tokens are used to trace the origin of the truth value of a given formula by
interpreting an expression of some sort. Our goal is to understand how a formula ends up being true in a first-order structure with $K$-team semantics.

Let $K$ be a commutative positive semiring. If a formula $\phi$ is true in a non-empty $K$-team, then we would like to obtain a polynomial expression involving the semiring values given to each assignment of the $K$-team that explains the truth of $\phi$. Instead, if the given formula is false we would like such expression to return $0$.

Notice that we already obtained a polynomial expression in Section \ref{sec:algebraic_semantics}, where the annotations played a role in the definition of algebraic $K$-team semantics. However, the literals' truth values lacked annotations. 
For a concrete $K$-team $\KX$, the number of different ways of satisfying a formula $\phi$ over $\mA$ and $\KX$ corresponds to the cardinality of the support of $\eval{\phi}{\mA}{\KX}$ (i.e., the number of assignments with domain $\Dom(\eval{\phi}{\mA}{\KX})\setminus \Dom(\KX)$ such that the expression returns a nonzero value).
Moreover, $\eval{\phi}{\mA}{\KX}$ can also be devised for counting the number of $K$-teams $\KX$ that satisfy $\phi$ over $\mA$ (cf. \cite{HaakKMVY19}).

To trace provenance, we define the following sentence for literals in a similar manner as in Section \ref{sec:Katoms}:
\[
\phi^{\vec{i}}_{\liteprov{T}} \dfn \, \forall \vec{x} \big( R(\vec{x}) = \bot \lor \big (R(\vec{x}) \land T(\vec{x}_{\vec{i}}) \big)
\]

Then, we define $K$-team provenance semantics using an analogous interpretation for first-order literals and atoms as the one defined in Section \ref{sec:Katoms}. 
If $T(\vec{x})$ is a first-order literal, then $\evalX{T(\vec{x})}{\KX}$ can be expanded into:
 \[ \prod_{s\colon \Dom(\KX)\to A} \Big( \chi[\KX(\vec{a}_s)=0] + \KX(\vec{a}_s)\cdot T\big(s(\vec{x})\big) \Big).  
\] 
This is extended for general formulae as in Section \ref{sec:algebraic_semantics}.

\subsection{Repairs}\label{sec:repairs}

To transform a database to an accurate reflection of the domain it is intended to model, some properties and conditions are imposed on the possible instances to avoid inconsistency. A notion of consistency of the database is then related to a set of ICs, which express some of the semantic structure that the data intends (or needs) to represent. 
It is common for a database to become inconsistent due to several reasons.
When a database does not satisfy its ICs, one possible approach is to perform minimal changes to obtain a ``similar'' database that satisfies the constraints. Such a database is called a \emph{repair} \cite{ArenasBC99}, and to define it properly one has to precisely determine the meaning of ``minimal change''.
Several definitions have been proposed and studied in the literature, usually given in terms of a distance or partial order between database instances. Which notion to use may depend on the application. 

We use $K$-team semantics to determine whether a given $K$-team $\KX$ satisfies a set of ICs. Assuming we have a way to measure distances between $K$-teams, if the ICs are not satisfied we could ask for a \emph{repair} of $\KX$ that does. 
That is, a $K$-team $\KY$ such that the ICs are satisfied in $\KY$, and $\KX$ and $\KY$ minimally differ in terms of the desired distance.
In what follows, we restrict to ordered semirings and stipulate the existence of additive inverses (i.e., ordered rings). 

Since $K$-team semantics allows to define dependencies in $K$-teams, one could ask for a notion of a repair that takes into account either dependence or independence. One possibility is to define a quantitative notion of non-independence to a $K$-team by assigning a value in the semiring using the weights of the assignments, indicating how far away we are from having independence. 
When looking at the independence atom defined in Section \ref{sec:Katoms}, we interpret that we have independence between $\vec{x}$ and $\vec{y}$ in a $K$-team $\KX$ if the equality 

$\sum_{s} \KX(s) \cdot \sum_{s(\vec{x}\vec{y})=\vec{a}\vec{b}} \KX(s)= \sum_{s(\vec{x})=\vec{a}}\KX(s)\cdot \sum_{s(\vec{y})=\vec{b}}\KX(s)$ 

\noindent holds for every pair $\vec{a},\vec{b}$.
If instead $\evalX{\pmi{x}{y}}{\KX}=0$, then at least one of these terms is false. Hence, for every pair $\vec{a},\vec{b}$ for which the equality does not hold, we measure how far away they are from being equal. More precisely, we consider:
\[ \evalX{\nonpmi{\vec{x}}{\vec{y}}}{\KX} =  \sum_{\vec{a},\vec{b}} \lvert \sum_{s} \KX(s) \cdot \sum_{s(\vec{x}\vec{y})=\vec{a}\vec{b}} \KX(s) - \sum_{s(\vec{x})=\vec{a}}\KX(s)\cdot \sum_{s(\vec{y})=\vec{b}}\X(s) \rvert \]
where the module $\lvert a-b \rvert$ is defined as $a-b$ if $a-b>0$, and $b-a$ otherwise, for any $a, b \in K$.

We now present a natural way to define distance between $K$-teams using the values in the semiring, and then introduce some notions of $K$-team repairs. 

Let $\KX, \KY$ be two $K$-teams. We define the \emph{symmetric difference of $\KX$ and $\KY$}, 
denoted by $\simdif{\KX}{\KY}$, 
as the $K$-team with weights $(\simdif{\KX}{\KY})(s)$ defined as:
\[ (\simdif{\KX}{\KY})(s) =  \lvert \KX(s)-\KY(s) \rvert \] 
Using this, we define a distance between $\KX$ and $\KY$ as:
\[ dist(\KX, \KY) = \sum_{s} (\simdif{\KX}{\KY})(s) \]
Notice that, if we already have some kind of norm or distance in $K$ (or $K^k$), then we can consider instead said norm as a distance. 
 Moreover, if we have a distance in $K$, then we do not need to ask for additive inverses in $K$.

Some notions of repairs that arise naturally in this context are the following: given a $K$-team $\KX$ and a formula $\phi$,

\begin{itemize}
    \item 
      A \emph{symmetric difference repair} of $\KX$ w.r.t. $\phi$ is a $K$-team $\KY$ that satisfies $\phi$ and is such that $dist(\KX,\KY)\leq dist(\KX',\KY)$ for all $K$-teams $\KX'$ satisfying $\phi$.
    If $K$ is the Boolean semiring, this notion becomes the cardinality-based repair known as the \emph{C-repair} \cite{LopatenkoB07}.

    \item A \emph{subteam repair} (resp. \emph{superteam repair}) of $\KX$ w.r.t. $\phi$ is a $K$-team $\KY$ that is a subteam (resp. {superteam}) of $\KX$ satisfying $\phi$, and such that $dist(\KX,\KY)\leq dist(\KX',\KY)$ for all subteams (resp. {superteams}) of $\KX$ satisfying $\phi$.


    \item Assuming $\phi$ is of the form $\pmi{\vec{x}}{\vec{y}}$, we can also consider notions of repairs that minimise $\evalX{\nonpmi{\vec{x}}{\vec{y}}}{\KX\triangle \KY}$.
\end{itemize}

\subsection{Complexity and K-machines}

Similar to the way we generalised team semantics over semirings, there have been several approaches to do the same for computational complexity.
A prominent related model of computation 
is the so-called \emph{BSS-model} \cite{blum1989}.
A BSS-machine over a semiring $K$ can be thought of as a Turing machine which has a tape of $K$-valued registers instead of just zeros and ones. 
The transition function then allows evaluating polynomial functions 
on a fixed interval of the tape in a single step.
In their book \cite{BSSbook}, Blum, Cucker, Shub, and Smale predominantly use this model of computation to investigate questions in the realms of real and complex numbers and shed light on the differences between them, and the Turing model.
Indeed, changing the underlying semiring often leads to profound complexity theoretic implications.
Take the Hilbert's 10th problem for example,
the question whether a multivariate polynomial with integer coefficients has an integer solution is undecidable.
However, asking for real solutions leads the problem to become decidable. Moreover, it is an open problem whether there exists a general decision procedure to check the existence of rational solutions.





It is a fascinating avenue for future work to investigate what general results can be proven for our formalisms in the context of BSS-complexity. In the Boolean setting most team-based logics are known to characterise $\NP$ \cite{KV16} (and thus $\NP$ on BSS-machines with access to the Boolean semiring), while \cite{abs-2003-00644} show a corresponding characterisation between probabilistic independence logic and $\NP$ on a variant of BSS-machines with access to the probabilistic semiring.

\section*{Acknowledgments}
Miika Hannula has been supported by the ERC grant 101020762. Juha Kontinen was partially funded by Academy of Finland grant 338259.
Nina Pardal was supported by the DFG grant VI 1045-1/1.
Jonni Virtema was partially supported by the DFG grant VI 1045-1/1 and by Academy of Finland grant 338259.

\bibliographystyle{kr}
\bibliography{kr-sample,biblio}

\end{document}